\newtheorem{theorem}{Theorem}[section]
\newtheorem{definition}{Definition}[section]
\newtheorem{lemma}{Lemma}[section]
\newtheorem{proposition}{Proposition}[section]
\newtheorem{corollary}{Corollary}[section]
\newtheorem{remark}{Remark}[section]
\def\C{\mathbb C}
\def\R{\mathbb R}
\def\N{\mathbb N}
\def\Q{\mathbb Q}
\def\Z{\mathbb Z}
\def\({{\rm (}}
\def\){{\rm )}}
\title{\bf Gauss sums, superoscillations and the Talbot carpet}
\author{F. Colombo,\  I. Sabadini,\ D.C. Struppa,\ 
 A. Yger\footnote{corresponding author}}
\newcommand{\Addresses}{{
  \bigskip
  \footnotesize

Fabrizio Colombo, Irene Sabadini \textsc{
  Politecnico di
Milano, Dipartimento di Matematica, Via E. Bonardi, 9 20133 Milano,
Italy}\par\nopagebreak \textit{E-mail addresses}, F.~Colombo:\ \texttt{fabrizio.colombo@polimi.it}, I.~Sabadini:\  \texttt{irene.sabadini@polimi.it} 

  \medskip

  Daniele. C.~Struppa, \textsc{The Donald Bren Presidential Chair in Mathematics, Chapman University, Orange, CA 92866, USA}\par\nopagebreak
  \textit{E-mail address}, D.C.~Struppa:\ \texttt{struppa@chapman.edu}

  \medskip

  Alain Yger (corresponding author), 
  \textsc{IMB, Universit\'e de Bordeaux, 33405, Talence, France}\par\nopagebreak
  \textit{E-mail address}, A.~Yger:\ \texttt{yger@math.u-bordeaux.fr}

}}
\begin{document}
\numberwithin{equation}{section}
\maketitle

\begin{abstract}
 We consider the evolution, for a time-dependent Schr\"odinger equation, of the so called Dirac comb. We show how this evolution allows us to recover explicitly (indeed optically) the values of the quadratic generalized Gauss sums. Moreover we use the phenomenon of superoscillatory sequences to prove that such Gauss sums can be asymptotically recovered from the values of the spectrum of any sufficiently regular function compactly supported on $\R$. The fundamental tool we use is the so called Galilean transform that was introduced and studied in the context on non-linear time dependent Schr\"odinger equations. Furthermore, we utilize this tool to understand in detail the evolution of an exponential $e^{i\omega x}$ in the case of a Schr\"odinger equation with time-independent periodic potential.
\end{abstract}

\vskip 1cm
\par\noindent
 AMS Classification: 32A15, 32A10, 47B38.
\par\noindent
\noindent {\em Key words}: Superoscillating functions,
 Schr\"odinger  equation, Gauss sums,  the Talbot carpet, entire functions with growth conditions.
\section{Introduction}\label{intro}

\noindent
An intriguing phenomenon, which was first discovered in \cite{aav} in the context of weak values in quantum mechanics, goes under the name of superoscillations, and is now well understood mathematically \cite{acsst5}. To describe such phenomenon in a few words, we begin by noticing that the sequence of entire functions in $z \in \C$
$$
F_N(z,\omega) :=
\sum\limits_{\nu=0}^N \binom{N}{\nu}
\Big(\frac{1 + \omega}{2}\Big)^{N-\nu}
 \Big(\frac{1 - \omega}{2}\Big)^{\nu}\,
 e^{i(1-2\nu/N)z}
$$
converges in $A_1(\C)= \{F\in H(\C)\,;\, |F(z)|\leq A e^{B |z|}
\ {\rm for\ some}\, A,B\geq 0\}$ towards $e^{i\omega z}$. As a consequence,
(see \cite[\textsection 5]{acsst5} or also \cite{acsst3, AOKI, cssy18}) this implies that if we denote by $\phi^s(t,x)$ the solution to
 the time-dependent Schr\"odinger equation
\begin{equation}\label{sect0-eq1}
\Big(i
\frac{\partial}{\partial t} +
\frac{\partial^2}{\partial x^2}\Big)(\phi)(t,x) =0,
\end{equation}
 with initial value $\phi(0,x)=e^{isx}$, then
 the sequence of distributions
$$
\sum\limits_{\nu=0}^N  \binom{N}{\nu}
\Big(\frac{1 + \omega}{2}\Big)^{N-\nu}
 \Big(\frac{1 - \omega}{2}\Big)^{\nu} \phi^{1-2\nu/N},\quad N\in \N^*,
$$
converges to the solution $\phi^\omega(t,x)$ in the space of Schwartz distributions $\mathscr D'(\R^+_t\times \R_x,\C)$ (for evident reasons related to the localization of the spectra with respect to the variable $x$, there cannot be of course convergence in the corresponding space of tempered distributions $\mathscr S'_x(\R^+_t\times \R_x,\C)$). Looking at the sequence of the restrictions to $\R$ of the functions $F_N(z,\omega)$, we have exactly the {\it superoscillation} phenomenon: each entry in such a sequence (therefore called {\it superoscillating}) has frequencies that lie in $[-1,1]\times \Q$ while the value of $\omega\in \R$ can be taken anywhere, and in particular outside
$[-1,1]$; note moreover that if $x$ denotes the real variable corresponding to the restriction of $z$ to $\R$, the convergence of $\{F_N(x,\omega)\}_{N\geq 1}$ towards $e^{i\omega x}$ is uniform on any compact subset of $\R$, hence holds in particular in $\mathscr D'(\R,\C)$
(but not of course in $\mathscr S'(\R,\C)$).
Such a superoscillating sequence evolves according to the time-dependent
Schr\"odinger equation \eqref{sect0-eq1} as an approximating sequence in
$\mathscr D'(\R^+_t\times \R_x,\C)$ for $\phi^\omega$,
which is the reason why, with respect to the parabolic equation $\eqref{sect0-eq1}$, the family
$$
\Big\{ \binom{N}{\nu}
\sum\limits_{\nu=0}^N
\Big(\frac{1 + \omega}{2}\Big)^{N-\nu}
 \Big(\frac{1 - \omega}{2}\Big)^{\nu}\, \phi^{1-2\tau/N}(t,x)\,;\, \omega \in \R\Big\} \subset \mathscr S'_x(\R^+_t \times \R_x,\C)
$$
realizes what is called a {\it supershift} in the sense of $\mathscr D'(\R^+_t \times \R_x\,,\C)$ for the family $\{\phi^\omega\,;\, \omega \in \R\}$
\cite{acsst5, cssy18}.
\vskip 2mm
\noindent

A case of special interest occurs when, given $M>0$, one consider as initial value for the Schr\"odinger euqtion the so called the Dirac comb
\begin{equation}\label{sect0-eq3}
u_M(x) =\frac{2\pi}{M} \sum\limits_{k\in \Z}
\delta \Big( x - \frac{2k\pi}{M}\Big) = \sum\limits_{k\in \Z} e^{iMkx}
\end{equation}
in which case the solution to the initial value problem can be easily computed as we show in Section $3$ (the Poisson summation formula provides the equality in \eqref{sect0-eq3}).
The reason why the Dirac comb is particularly interesting as initial datum for the time-dependent Schr\"odinger equation comes from optics. It is indeed well known that Fresnel diffraction in the so-called near-field Fresnel zone beyond the (vertical) reception plane $tOx$ (where the $t$-axis figures the horizontal direction orthogonal to the reception plane,
$t>0$ being interpreted to the distance to this plane) can be described from the mathematical point of view by the Fresnel integral
$$
\int_0^x e^{iy^2}\, dy.
$$
More precisely, if the variable $t$ denotes the distance to the
reception plane in the near-field Fresnel zone beyond the incidence and
$u_M(x)$ is the Dirac comb that models a periodic grating along the vertical axis $x'Ox$ in the reception plane, then what we observe beyond the shadow zone at a distance $t>0$ from the reception plane (when one remains in the near-field Fresnel zone) is the distribution
$\phi_{u_M}~:= \sum_{k\in \Z} e^{-i (Mk)^2 t} e^{iMkx}$. Since it is easily seen that this is indeed
a distribution of order $0$ with respect to the variable $t$, one can naturally restrict it to each line $\{(\tau,x)\,;\, x\in \R\}$ for any $\tau>0$. When
$\tau=t_{M,p,q}$ is exactly a rational fraction $((2\pi)/M^2)\times p/q$ of $2\pi/M^2$, where $p,q\in \N^*$,
$p\in \{0,...,q-1\}$ coprime with $q$, such a restriction is in fact
a positive measure equal to
\begin{equation}\label{sect0-eq4}
(\phi_{u_M} (t_{M,p,q},x))_{|t=t_{p,q}}
= \delta (t-t_{M,p,q})\otimes
\Big(\sum\limits_{\kappa=0}^{q-1} G(-p,\kappa,q)\,
\Big( \frac{2\pi}{Mq}
\sum\limits_{k\in \Z}
\delta \Big( x - \frac{2k\pi}{M} - \frac{2\pi \kappa}{Mq}\Big)\Big)\Big),
\end{equation}
where $G(-p,\kappa,q)$ is the {\it quadratic generalized Gauss sum}
defined as
\begin{equation}\label{sect0-eq5}
G(-p,\kappa,q) = G(-p,\kappa,q) = \sum\limits_{\ell=0}^{q-1} e^{2i\pi (-p \ell^2 +\kappa \ell)/q}.
\end{equation}
An intriguing discovery appears here : as we will recall it in \textsection \ref{section3}, quadratic generalized Gauss sums \eqref{sect0-eq5} are quantities of arithmetic nature, which an optical device (namely the diffraction through a periodic grating as pattern) allows to recover {\it optically}. In particular, the vanishing of $G(-p,\kappa,q)$ when $q=2q'$, $q'-\kappa \equiv 1$ modulo $2$ (see the incoming discussion in \textsection \ref{section3}, in particular \eqref{sect2-GS2}) is illustrated experimentally by the observation of the so-called {\it Talbot carpet} (\cite{berryHan80, berryGold88, Tayl03, dHV14}, se also \cite{WZX13} for an updated presentation and references).
\vskip 2mm
\noindent
The goal of this paper was therefore to connect these peculiar phenomena. Instead of recovering explicitly (optically) the quadratic generalized Gauss sums
through \eqref{sect0-eq4} (observe that the spectra of the Dirac combs involved are equal to full lattices in $\R$, thus involving arbitrary large frequencies), we will show
(in Theorem \ref{sect2-thm1}) that, given any sufficiently regular
($\mathscr C^2$ is sufficient) $\varphi$ on $\R_x$ with compact support in
$[-1,1]$ and such $\varphi(0)=1$, any quadratic generalized Gauss sum
$G(-p,\kappa,q)$ can be asymptotically recovered from the values of
the spectrum $\widehat \varphi$ of $\varphi$ on $[-\pi,\pi]$.
One should also refer to \cite{b4} and \cite{Gbur19} to point out the intimate connection between diffraction in the near-field Fresnel zone and superoscillations, as already observed in the pioneering works of G. Toraldo di Francia \cite{TorF52} in the fifties.
\vskip 2mm
\noindent
The main tool we introduce here was extensively used in
\cite{dHV14} in the context of the non-linear time dependent Schr\"odinger equation on the unit sphere, where it was called {\it Galilean transform}. The goal of \cite{dHV14} was to analyze the phenomenon leading to the Talbot carpet (previously mentioned) in a non-linear setting.
In the (much simpler) linear setting where we remain in this paper, such a Galilean transform consists in an elementary twisting operator between sets of solutions
of Cauchy problems (in $\mathscr D'([0,T[_t\times \R_x),\C)$, $T\in ]0,+\infty]$) for time-dependent Schr\"odinger equations of the form
\begin{equation}\label{sect0-eq6}
\Big(i
\frac{\partial}{\partial t} + \alpha
\frac{\partial^2}{\partial x^2}\Big)(\phi)(t,x) = V(t,x)\, \phi(t,x)
\end{equation}
where $V~: \R_t\times \R_x\rightarrow \R$ is a $\mathscr C^\infty$ real potential and $\alpha \in \R^*$. We will present in \textsection 2 such a tool and use it in \textsection 3 in the very particular case where $V\equiv 0$ and $\alpha=1$, in relation with the asymptotic computation of Gauss sums from the low-band spectrum of a smooth function with compact support. We will explain in \textsection 4 how such Galilean transform can be also exploited in order to compute the evolution in $\mathscr S'_x(\R^+_t \times \R_x)$ of $x\mapsto [e^{i\omega x}]$, $\omega \in \R$, under the time-independent Schr\"odinger equation
\begin{equation}\label{sect0-eq7}
\Big(i
\frac{\partial}{\partial t} + \alpha
\frac{\partial^2}{\partial x^2}\Big)(\phi)(t,x) = V(x)\, \phi(t,x)
\end{equation}
where $V: \R\rightarrow \R$ is a $\mathscr C^\infty$ periodic potential and $\alpha = \pm 1$. We formulate in Theorem \ref{sect4-thm2}
a closed formula for the evolution $\phi^{\alpha,\omega}$ of $[e^{i\omega x}]$ in
$\mathscr S'_x(\R^+_t\times \R_x,\C)$ according to the time-dependent Schr\"odinger equation \eqref{sect0-eq7}. We deduce from such a formulation in Theorem \ref{sect4-thm3} an approximated version of the fact that the family
$$
\Big\{
(t,x)\in \R^+_t \times \R_x \longmapsto \binom{N}{\nu}
\sum\limits_{\nu=0}^N
\Big(\frac{1 + \omega}{2}\Big)^{N-\nu}
 \Big(\frac{1 - \omega}{2}\Big)^{\nu}\, \phi^{\alpha, 1-2\tau/N}(t,x)\,;\, \omega \in \R\Big\}
$$
realizes a supershift for the family $\{\phi^{\alpha,\omega}\,;\,
\omega \in \R\}$ in the sense of distributions on $\R_t^+ \times \R_x$.

\section{Twisting solutions of time-dependent Schr\"odinger equations}\label{section2}

We begin by giving the formal definitions of the objects that we will be using throughout this paper, and by clarifying what we mean when we talk about solutions of the Schr\"odinger equation.

By $\mathscr S'_x(\R_t^+\times \R_x,\C)$ we denote the $\C$-vector space of complex valued distributions on $\R_t^+ \times \R_x$ (that is on
$\R_t\times \R_x$ with support in $\R^+_t \times \R_x$) which are {\it tempered} with respect to the variable $x$. It will be useful to interpret this notion as follows : if $\mathbb S^1=\{w\in \R^2\,;\|w\|=1\}\simeq \R/(2\pi \Z)$ and
$\boldsymbol \pi~: (t,w)\in \R^+ \times (\mathbb S^1 \setminus \{(0,1)\}) \leftrightarrow (t,\pi(w))=(t,x)\in \R$ denotes the stereographic projection, a complex valued distribution $\phi$ on $\R^+_t \times \R_x$ is tempered with respect to the variable $x$ if and only if the pushforward distribution $(\boldsymbol \pi^{-1})_*(\phi)$
on $\R^+ \times (\mathbb S^1\setminus \{(0,1)\})$ extends as a distribution on $\R^+ \times \mathbb S^1$.
\vskip 2mm
\noindent
Let $u\in \mathscr S'(\R_x,\C)$. It can be easily checked that
$$
u~: \varphi \in \mathscr S'(\R_x,\C)
\longmapsto \langle u(x),\varphi(x)\rangle = \Big\langle \widehat u(\xi)\,,\,
\xi \longmapsto \frac{1}{2\pi}
\int_\R \varphi(x) e^{i\xi x}\, dx\Big\rangle
$$
evolves in $\R^+_t \times \R_x$ (in a unique way) as a solution $(t,x)\mapsto \phi_u(t,x)$ (in $\R^+_t \times \R_x$)  of the time-dependent Schr\"odinger equation
\begin{equation}\label{sect1-eq0}
\Big(i
\frac{\partial}{\partial t} +
\frac{\partial^2}{\partial x^2}\Big)(\phi_u)(t,x) =0.
\end{equation}
This solution $\phi_u$ belongs to $\mathscr S'_x(\R^+_t\times \R_x,\C)$ and is given as follows.
For any $\theta\in \mathscr D(\R_t,\C)$ and any $\varphi\in
\mathscr S(\R_x,\C)$, then
\begin{equation}\label{sect1-eq0bis}
\big\langle \phi_u (t,x)\,,\,
\theta (t) \otimes \varphi(x)\big\rangle =
\Big\langle \widehat u(\xi)\,,\, \xi
\longmapsto \frac{1}{2\pi}
\int_{\R^2} \theta(t)\, \varphi(x)\, e^{-i x^2 t} e^{i\xi x}\, dt\, dx\Big\rangle.
\end{equation}
More precisely : the distribution $\phi_u$ which action on test functions is defined by \eqref{sect1-eq0} satisfies \eqref{sect1-eq0bis}
in the sense of distributions on $\R^+_t \times \R_x$ and one has for all
$\varphi \in \mathscr S(\R_x,\C)$ that
$$
\lim\limits_{\varepsilon \rightarrow 0^+}
\big\langle \phi_u (t,x)\,,\,
\frac{\rho (t/\varepsilon)}{\varepsilon} \otimes \varphi(x)\big\rangle = \langle u(x),\varphi(x)\rangle
$$
whenever $\rho \in \mathscr C^\infty(\R^+_t,\R)$ with
${\rm Supp} \, \rho \subset [0,1[$ and $\int_{[0,1[} \rho(t)\, dt =1$.
\vskip 2mm

Let $T \in ]0,+\infty]$, $(t,x)\in [0,T[ \times \R \longmapsto V(t,x)$ be a real $C^\infty$ function in $[0,T[\times \R$ and $\alpha$ be a non-zero real constant.
Let $\mathscr D_{V,\alpha}([0,T[_t\times \R_x,\C)$ be the subspace of
$$
\mathscr D'([0,T[_t\times \R,\C) = \{\phi \in
\mathscr D'(]-\infty,T[_t\times \R_x,\C)\,;\, {\rm Supp}\, \phi \subset [0,T[  \times \R\}
$$
which elements are the distributions
$\phi \in \mathscr D'([0,T[_t\times \R_x,\C)$ which satisfy in $[0,T[_t\times \R_x$ (in the sense of distributions in $[0,T[_t\times \R_x$) the time-dependent Schr\"odinger linear equation
\begin{equation}\label{sect1-eq1}
\Big(i\frac{\partial}{\partial t} + \alpha
\frac{\partial^2}{\partial x^2}\Big)(\phi)  = V(t,x)\phi(t,x).
\end{equation}
\vskip 1mm
\noindent

\begin{definition}\label{sect1-def1}
{\rm Let $V$, $\alpha$ as above and $u\in \mathscr D'(\R,\C)$. The distribution $u$ is said to {\rm evolve as $\phi\in \mathscr D_{V,\alpha}([0,T[_t\times \R_x,\C)$} if and only if
\begin{equation}\label{sect1-eq2}
\forall\, \varphi
\in \mathscr D(\R_x,\C),\
\lim\limits_{\varepsilon \rightarrow 0^+}
\big\langle \phi(t,x)\,,\, \rho_\varepsilon(t)\, \varphi(x)\big\rangle =
\langle u(x),\varphi(x)\rangle
\end{equation}
for any approximation $(\rho_\varepsilon(t))_{\varepsilon >0}
= (\rho(t/\varepsilon)/\varepsilon)_{\varepsilon >0}$
of $\delta_0$ in $\mathscr D'([0,T[_t,\R^+)$ ($\rho \in \mathscr D([0,T[_t,\R^+)$ with $\int_0^1 \theta(t) dt =1$).}
\end{definition}
\vskip 1mm
\noindent
The following immediate lemma reflects the effect of a change of scaling along the $x$-direction.

\begin{lemma} Let $\alpha = \pm \gamma^2$ with $\gamma>0$. Then
\begin{equation}\label{sect1-eq3}
\phi \in \mathscr D'_{V,\alpha}([0,T[_t \times \R_x,\C)
\longmapsto \big((t,x) \mapsto
\phi (t,\rho x)\big)
\end{equation}
realizes a continuous isomorphism between
$\mathscr D'_{V,\alpha}([0,T[_t\times \R_x,\C)$ and
$\mathscr D'_{V_\gamma,\pm 1}([0,T[_t\times \R_x,\C)$,
where $V_\gamma (t,x) = V(t,\gamma x)$. Moreover, if
$u\in \mathscr D'(\R_x,\C)$ evolves as $\phi
\in \mathscr D'_{V,\alpha}([0,T[_t\times \R_x,\C)$, then
$x\mapsto u (\rho x)$ evolves as $(t,x)\mapsto \phi(t,\gamma x)$ in
$\mathscr D'_{V_\gamma,\pm 1}([0,T[_t \times \R_x,\C)$.
\end{lemma}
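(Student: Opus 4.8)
The plan is to reduce the whole statement to the elementary behaviour of distributions under the fixed linear diffeomorphism $S_\gamma\colon(t,x)\mapsto(t,\gamma x)$ of $\R_t\times\R_x$ (here $\gamma>0$ is the scaling factor, for which \eqref{sect1-eq3} writes $\rho$). Since $S_\gamma$ fixes the $t$-coordinate and is a bijection of $\R^+_t\times\R_x$ onto itself, the pull-back $S_\gamma^*\colon\phi\mapsto\big((t,x)\mapsto\phi(t,\gamma x)\big)$ is a continuous automorphism of $\mathscr D'(]-\infty,T[_t\times\R_x,\C)$ (its transpose on test functions being the continuous map $g\mapsto\tfrac1\gamma\,g(t,x/\gamma)$), it preserves the support condition $\mathrm{Supp}\subset[0,T[\times\R$ because it does not move $t$, and its inverse is $S_{1/\gamma}^*$.

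First I would check that $S_\gamma^*$ intertwines the two Schr\"odinger equations as required. At the level of smooth functions, writing $\psi=S_\gamma^*\phi$ one has $\partial_t\psi=S_\gamma^*(\partial_t\phi)$ and $\partial_x^2\psi=\gamma^2\,S_\gamma^*(\partial_x^2\phi)$, hence
\[
\Big(i\partial_t+\tfrac{\alpha}{\gamma^2}\,\partial_x^2\Big)\psi
=S_\gamma^*\big((i\partial_t+\alpha\,\partial_x^2)\phi\big)
=S_\gamma^*(V\phi)=V_\gamma\,\psi .
\]
These intertwining relations persist verbatim once $\partial_t,\partial_x^2,S_\gamma^*$ are read on $\mathscr D'$, so the identity holds in the sense of distributions, and since $\alpha=\pm\gamma^2$ forces $\alpha/\gamma^2=\pm1$ this says exactly that $\psi\in\mathscr D'_{V_\gamma,\pm1}([0,T[_t\times\R_x,\C)$. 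Running the same computation for $S_{1/\gamma}^*$, and using $(V_\gamma)_{1/\gamma}=V$, shows that $S_{1/\gamma}^*$ sends $\mathscr D'_{V_\gamma,\pm1}$ into $\mathscr D'_{V,\alpha}$; as $S_\gamma^*$ and $S_{1/\gamma}^*$ are mutually inverse and continuous, \eqref{sect1-eq3} is the asserted continuous isomorphism.

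Next I would handle the claim about initial data. Assume $u$ evolves as $\phi$ in the sense of Definition \ref{sect1-def1}, fix $\varphi\in\mathscr D(\R_x,\C)$ and an approximation $(\rho_\varepsilon)_{\varepsilon>0}$ of $\delta_0$ in the variable $t$. Because $S_\gamma$ leaves $t$ untouched, the change of variable $x\mapsto\gamma x$ in the $x$-pairing gives, with $\psi=S_\gamma^*\phi$,
\[
\big\langle\psi(t,x),\,\rho_\varepsilon(t)\,\varphi(x)\big\rangle
=\tfrac1\gamma\,\big\langle\phi(t,x),\,\rho_\varepsilon(t)\,\varphi(x/\gamma)\big\rangle ,
\]
and $x\mapsto\varphi(x/\gamma)$ again lies in $\mathscr D(\R_x,\C)$. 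Letting $\varepsilon\to0^+$ and invoking the hypothesis for this test function (with the very same $\rho_\varepsilon$, which still belongs to the admissible class since $t$ is unchanged), the right-hand side tends to $\tfrac1\gamma\langle u(x),\varphi(x/\gamma)\rangle=\langle u(\gamma x),\varphi(x)\rangle$; this is precisely the statement that $x\mapsto u(\gamma x)$ evolves as $(t,x)\mapsto\phi(t,\gamma x)$ in $\mathscr D'_{V_\gamma,\pm1}$.

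I do not expect any genuine obstacle here: the argument is pure bookkeeping. The only points that deserve a line of justification are that $S_\gamma^*$ really commutes with $\partial_t$ and multiplies $\partial_x^2$ by $\gamma^2$ on distributions (standard, $S_\gamma$ being a linear diffeomorphism), that it preserves the support condition in the variable $t$, and that the two displayed changes of variable are legitimate for the distributional pairings involved, which is exactly why the paper calls this lemma ``immediate''.
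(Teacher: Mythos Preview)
Your proof is correct and follows essentially the same approach as the paper: substitute $\phi(t,x)=\psi(t,x/\gamma)$ into the Schr\"odinger equation and use the chain rule to see how the coefficient in front of $\partial_x^2$ rescales. You have simply spelled out the bookkeeping (the intertwining relations for $\partial_t$ and $\partial_x^2$, the inverse map, and the verification of the ``moreover'' clause about initial data) that the paper's two-line proof leaves implicit.
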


\begin{proof} Let $\phi \in \mathscr D'_{V,\pm \gamma^2}([0,T[_t\times \R_x,\C)$. Then $\phi (t,x) = \phi(t,\gamma x)(t,x/\gamma)$. Substituting
in \eqref{sect1-eq1}, one gets immediately that $(t,x)\mapsto \phi(t,\gamma x)
\in \mathscr D'_{V_\gamma,\pm 1}([0,T[_t\times \R_x,\C)$.
\end{proof}

\vskip 1mm
\noindent
Given $\omega \in \R$, let $e^{-t\omega D_x}$ be the continuous invertible {\it shift-operator} from $\mathscr D'([0,T[_t\times \R_x,\C)$ into itself which acts as
\begin{equation}\label{sect1-eq4}
e^{-t\omega D_x}~:
\phi \in \mathscr D'([0,T[_t \times \R_x,\C) \longmapsto
e^{-t\omega D_x} (\phi)~: (t,x) \mapsto \phi (t,x-t\omega).
\end{equation}
Let also $\mathbb M_\omega$ be the continuous invertible {\it modulation-operator}
from $\mathscr D'([0,T[_t\times \R_x,\C)$ into itself which acts as
\begin{equation}\label{sect1-eq5}
\mathbb M_\omega~:
\phi \in \mathscr D'([0,T[_t \times \R_x,\C) \longmapsto
\mathbb M_\omega (\phi)~: (t,x) \mapsto  e^{i\omega x}
\phi (t,x).
\end{equation}

\begin{proposition}\label{sect1-prop1}
Let $\alpha = \pm 1$ and $\omega \in \R$. The operator
\begin{equation}\label{sect1-eq6}
\mathbb G_V^\omega = \Big(e^{-t\omega D_x}
\circ \mathbb M_\omega \circ e^{-t\omega D_x}\Big)_{|\mathscr D'_{V,\pm 1}([0,T[_t\times \R_x,\C)}
\end{equation}
realizes a continuous isomorphism between
$\mathscr D_{V,\alpha}([0,T[_t\times \R_x,\C)$ and $\mathscr D_{V^\omega,\alpha}([0,T[\times \R,\C)$, where
\begin{equation}\label{sect1-eq7}
V^\omega (t,x) = V(t, x-2 t\omega).
\end{equation}
One has $(\mathbb G_V^\omega)^{-1} = \mathbb G_{V^\omega}^{-\omega}$.
Moreover, if $u\in \mathscr D'(\R_x,\C)$ evolves as $\phi$ in
$\mathscr D'_{V,\alpha}([0,T[ \times \R,\C)$, then the modulated distribution
$e^{i\omega x}\, u\in \mathscr D'(\R_x,\C)$ evolves as $\mathbb G_\omega (\phi)$
in $\mathscr D'_{V^{\omega},\alpha}([0,T[ \times \R,\C)$.
\end{proposition}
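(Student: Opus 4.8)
The plan is to make the composite operator $\mathbb{G}_V^\omega$ explicit and then prove in turn the three assertions. First, composing the three factors in \eqref{sect1-eq6}, I would record the closed form
\[
\mathbb{G}_V^\omega(\phi)(t,x)=e^{i\omega(x-\omega t)}\,\phi(t,x-2t\omega),
\]
valid for every $\phi\in\mathscr{D}'([0,T[_t\times\R_x,\C)$ and to be read distributionally: the operator is the multiplication by the smooth nowhere-vanishing factor $e^{i\omega(x-\omega t)}$ followed by the affine substitution $x\mapsto x-2t\omega$ (of Jacobian $1$) in the $x$-variable only. Taking $\alpha=1$ for definiteness — the case $\alpha=-1$ being handled in the same way — I would then check by a direct Leibniz/chain-rule computation (legitimate on distributions precisely because of the previous remark) that
\[
\Big(i\partial_t+\partial_x^2\Big)\big(\mathbb{G}_V^\omega\phi\big)(t,x)-V^\omega(t,x)\,\big(\mathbb{G}_V^\omega\phi\big)(t,x)=e^{i\omega(x-\omega t)}\,\Big[\big(i\partial_t+\partial_x^2\big)\phi-V\phi\Big]_{(t,\,x-2t\omega)},
\]
the first-order cross terms produced by the modulation and the translation cancelling exactly because of the coefficient $2$ in $x-2t\omega$ and the phase $-\omega^2t$. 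This shows that $\mathbb{G}_V^\omega$ carries $\mathscr{D}_{V,\alpha}([0,T[_t\times\R_x,\C)$ into $\mathscr{D}_{V^\omega,\alpha}([0,T[_t\times\R_x,\C)$.

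For the second assertion, each of the three factors in \eqref{sect1-eq6} is a continuous automorphism of $\mathscr{D}'([0,T[_t\times\R_x,\C)$ — the shifts $e^{\mp t\omega D_x}$ are pullbacks by affine diffeomorphisms preserving both the foliation $\{t=\mathrm{const}\}$ and the hyperplane $\{t=0\}$, and $\mathbb{M}_\omega$ is the multiplication by a smooth unit — so $\mathbb{G}_V^\omega$ is one too. I would then substitute $(V^\omega,-\omega)$ for $(V,\omega)$ in the formula above, note that $(V^\omega)^{-\omega}=V$, and verify directly that composing $\mathbb{G}_{V^\omega}^{-\omega}$ with $\mathbb{G}_V^\omega$ in either order makes both phases and both translations cancel, so that $\mathbb{G}_{V^\omega}^{-\omega}$ is a two-sided inverse which — by the first step applied to the potential $V^\omega$ — sends $\mathscr{D}_{V^\omega,\alpha}$ back into $\mathscr{D}_{V,\alpha}$. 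Together with the first step, this yields the claimed continuous isomorphism and the identity $(\mathbb{G}_V^\omega)^{-1}=\mathbb{G}_{V^\omega}^{-\omega}$.

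The third assertion is the delicate one. Formally $\mathbb{G}_V^\omega(\phi)(0,x)=e^{i\omega x}\phi(0,x)$, so one expects $e^{i\omega x}u$ to evolve as $\mathbb{G}_V^\omega(\phi)$; the task is to justify this against Definition \ref{sect1-def1}. Fixing $\varphi\in\mathscr{D}(\R_x,\C)$ and an approximation $(\rho_\varepsilon)_{\varepsilon>0}=(\rho(\cdot/\varepsilon)/\varepsilon)_{\varepsilon>0}$ of $\delta_0$, I would transpose the shift and modulation operators to get
\[
\big\langle\mathbb{G}_V^\omega(\phi)\,,\,\rho_\varepsilon\otimes\varphi\big\rangle=\big\langle\phi(t,x)\,,\,\rho_\varepsilon(t)\,\varphi_t(x)\big\rangle,\qquad\varphi_t(x):=e^{i\omega x+i\omega^2t}\,\varphi(x+2t\omega),
\]
where $t\mapsto\varphi_t$ is a smooth curve in $\mathscr{D}(\R_x,\C)$ with $\varphi_0(x)=e^{i\omega x}\varphi(x)$. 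The obstruction is that the twist has turned the tensor test function $\rho_\varepsilon\otimes\varphi$ into one in which $t$ and $x$ are genuinely coupled, so the hypothesis ``$u$ evolves as $\phi$'' does not apply directly. To decouple the variables I would Taylor-expand $\varphi_t=\sum_{0\le j<N}\frac{t^j}{j!}\,\psi_j+t^N R_N(t,\cdot)$ in $\mathscr{D}(\R_x,\C)$, with $\psi_j=\partial_t^j\varphi_t|_{t=0}$ and $R_N$ bounded in $\mathscr{D}(K)$ for $t$ near $0$ ($K$ a fixed compact), and use that $t^j\rho_\varepsilon(t)=\varepsilon^j(\rho_{[j]})_\varepsilon(t)$, where $\rho_{[j]}(s)=s^j\rho(s)$ is a non-negative test function of integral $c_j>0$ with $c_0=1$. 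Applying the hypothesis to the normalized approximations $\rho_{[j]}/c_j$: the $j=0$ term then tends to $\langle u,\psi_0\rangle=\langle e^{i\omega x}u,\varphi\rangle$, each term with $1\le j\le N-1$ is $\varepsilon^j$ times a convergent quantity and tends to $0$, and the remainder equals $\varepsilon^N\big\langle\phi\,,\,(\rho_{[N]})_\varepsilon(t)\,R_N(t,x)\big\rangle$; since this last test function is supported in $[0,\varepsilon]\times K$ with $\mathscr{C}^k$-norm $O(\varepsilon^{-k-1})$ and $\phi$ has some finite order $m$ on the fixed compact $[0,1]\times K$, this remainder is $O(\varepsilon^{N-m-1})$ and vanishes once $N\ge m+2$. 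Summing, $\langle\mathbb{G}_V^\omega(\phi),\rho_\varepsilon\otimes\varphi\rangle\to\langle e^{i\omega x}u,\varphi\rangle$ as $\varepsilon\to0^+$, for every such $\varphi$ and $(\rho_\varepsilon)$, which is the last assertion. The main obstacle is precisely this step — interchanging the Galilean twist with the time-regularization defining ``evolves as'' — and the device that overcomes it is the Taylor expansion in $t$ combined with the finite-order bound for $\phi$ on a fixed compact, which turns the $(t,x)$-coupling into negligible higher-order contributions.
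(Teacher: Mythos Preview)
Your argument is correct, and on the first two assertions it is essentially the paper's proof: the paper applies the three factors one at a time (introducing the intermediate functions $\tilde\phi$, $\check\phi$, $\mathring\phi$) and records the transformed equation after each step, whereas you first write down the composite $\mathbb G_V^\omega(\phi)(t,x)=e^{i\omega(x-\omega t)}\phi(t,x-2t\omega)$ and compute once. The content is the same; your packaging is slightly more economical. Both you and the paper treat the case $\alpha=-1$ by the throwaway ``the same way''.

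The genuine difference is in the third assertion. The paper disposes of it in a single sentence (``follows from the definition of the operator $\mathbb G_V^\omega$''), implicitly relying on the formal identity $\mathbb G_V^\omega(\phi)(0,x)=e^{i\omega x}\phi(0,x)$. You correctly identify that Definition~\ref{sect1-def1} is stated only for \emph{tensor} test functions $\rho_\varepsilon(t)\varphi(x)$, so after transposing the Galilean twist one lands on $\rho_\varepsilon(t)\varphi_t(x)$ with $t$ and $x$ genuinely coupled, and the hypothesis does not apply verbatim. Your Taylor expansion of $t\mapsto\varphi_t$ in $\mathscr D(\R_x)$, combined with the observation that $t^j\rho_\varepsilon=\varepsilon^j(\rho_{[j]})_\varepsilon$ produces new admissible mollifiers (nonnegative, positive integral), and the local finite-order bound on $\phi$ to kill the remainder, is a clean and honest way to close this gap. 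So your proof is strictly more detailed than the paper's on this point; what it buys is an actual verification of \eqref{sect1-eq2} rather than an appeal to a formal restriction at $t=0$.
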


\begin{remark}\label{sect1-rem1}
{\rm The introduction of the operator $\mathbb G_V^\omega$, together with its inverse, as twisting operators between $\mathscr D'_{V,\alpha}([0,T[_t \times \R,\C)$
and $\mathscr D'_{V^\omega,\alpha}([0,T[_t \times \R_x,\C)$,
is motivated by the essential role it plays in the mathematical formalisation of Talbot effect (see \textsection 3) in the linear as well as in the non-linear settings \cite{dHV14}.}
\end{remark}

\begin{proof}
Take $\alpha =-1$ (the proof being the same when $\alpha =-1$).
Let $\phi \in \mathscr D'_{V,1} ([0,T[_t \times \R,\C)$.
Let $\tilde \phi(t,x) = \phi(t,x-t\omega)$, that is
$\phi(t,x) = \tilde\phi (t,x+ t\omega)$. Then \eqref{sect1-eq1} implies that, in the sense of distributions on $[0,T[_t \times \R_x$,
\begin{equation}\label{sect3-eq8}
\Big( i \Big(\frac{\partial}{\partial t} -
\omega \frac{\partial}{\partial x}\Big) +
\frac{\partial^2}{\partial x^2}\Big) (\tilde \phi) (t,x)
= V(t,x- t\omega)\, \tilde \phi(t,x).
\end{equation}
Let now $\check \phi : (t,x) \mapsto e^{i\omega x}
\tilde \phi$, that is $\tilde \phi(t,x) = e^{-i\omega x}
\check \phi(t,x)$, so that, substituting in \eqref{sect3-eq8}, one gets
\begin{equation}\label{sect1-eq9}
\Big (i
\Big(\frac{\partial}{\partial t}
- \omega \, \frac{\partial}{\partial x}\Big)
+ \frac{\partial^2}{\partial x^2}\Big) (\check \phi) (t,x) =
V(t,x- t\omega)\, \check \phi(t,x).
\end{equation}
Let finally $\mathring \phi : (t,x) \mapsto \check \phi(t,x-\omega t)$, that is $\mathring \phi(t,x) = \phi (t,x+t \omega)$. Then \eqref{sect1-eq9} implies that
$$
\Big(
i\frac{\partial}{\partial t} +
\frac{\partial^2}{\partial x^2}
\Big)(\mathring \phi) (t,x+ t\omega) =
V(t,x-t\omega)\, \mathring \phi(t,x + t\omega),
$$
that is
$$
\Big(
i\frac{\partial}{\partial t} +
\frac{\partial^2}{\partial x^2}
\Big)(\mathring \phi) (t,x) = V(t, x-2t\omega)\, \phi(t,x)
\Longleftrightarrow \mathring \phi \in \mathscr D'_{1,V^\omega}([0,T[ \times \R,\C).
$$
The invertibility of $\mathbb G_V^\omega$, together with the inversion formula $(\mathbb G_V^\omega)^{-1} = \mathbb G_{V^\omega}^{-\omega}$, follows as an immediate consequence of these computations. The last assertion follows from the definition of the operator $\mathbb G^\omega_V$.
\end{proof}

\noindent
Proposition \ref{sect1-prop1} admits the following immediate consequence.

\begin{corollary}\label{sect1-cor1}
Let $V$, $\alpha, \omega$ as in Proposition \ref{sect1-prop1}. If the distribution $[1]$ evolves as $\phi$ in $\mathscr D'_{V^{-\omega},\alpha}([0,T[_t\times \R_x,\C)$, then the modulated distribution $[e^{i\omega x}]$ evolves as
$\phi^\omega = \mathbb G^\omega_{V^{-\omega}} (\phi)$ in the space $\mathscr D'_{V,\alpha}([0,T[_t\times \R_x,\C)$.
\end{corollary}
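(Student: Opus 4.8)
The plan is to obtain the statement by simply specializing Proposition \ref{sect1-prop1}, replacing the potential $V$ throughout by $V^{-\omega}$. The first thing I would record is the elementary identity $(V^{-\omega})^\omega = V$. Indeed, by \eqref{sect1-eq7} one has $V^{-\omega}(t,x) = V(t,x+2t\omega)$, hence $(V^{-\omega})^\omega(t,x) = V^{-\omega}(t,x-2t\omega) = V(t,(x-2t\omega)+2t\omega) = V(t,x)$. Before invoking the proposition I would also check that its hypotheses are preserved under $V \rightsquigarrow V^{-\omega}$: the function $V^{-\omega}$ is again real and $\mathscr C^\infty$ on $[0,T[_t\times \R_x$, since it is the composition of $V$ with the smooth self-map $(t,x)\mapsto (t,x+2t\omega)$ of $[0,T[_t\times \R_x$, and $\alpha = \pm 1$ is unchanged.

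With these remarks in hand, Proposition \ref{sect1-prop1} applied to the potential $V^{-\omega}$ states that $\mathbb G^\omega_{V^{-\omega}}$ is a continuous isomorphism from $\mathscr D'_{V^{-\omega},\alpha}([0,T[_t\times \R_x,\C)$ onto $\mathscr D'_{(V^{-\omega})^\omega,\alpha}([0,T[_t\times \R_x,\C) = \mathscr D'_{V,\alpha}([0,T[_t\times \R_x,\C)$; in particular, $\phi^\omega := \mathbb G^\omega_{V^{-\omega}}(\phi)$ lies in $\mathscr D'_{V,\alpha}([0,T[_t\times \R_x,\C)$ as soon as $\phi\in \mathscr D'_{V^{-\omega},\alpha}([0,T[_t\times \R_x,\C)$. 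Then I would apply the last assertion of Proposition \ref{sect1-prop1}, again with $V$ replaced by $V^{-\omega}$ and with initial datum $u = [1]$: since $[1]$ evolves as $\phi$ in $\mathscr D'_{V^{-\omega},\alpha}([0,T[_t\times \R_x,\C)$, the modulated distribution $e^{i\omega x}\cdot[1] = [e^{i\omega x}]$ evolves as $\mathbb G^\omega_{V^{-\omega}}(\phi) = \phi^\omega$ in $\mathscr D'_{(V^{-\omega})^\omega,\alpha}([0,T[_t\times \R_x,\C) = \mathscr D'_{V,\alpha}([0,T[_t\times \R_x,\C)$, which is exactly the claim.

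There is no genuine obstacle here: the corollary is a direct specialization of the preceding proposition, and the only computation involved is the one-line verification that the composition of the two potential twists, $(V^{-\omega})^\omega$, returns $V$. The only point deserving a word of caution is the bookkeeping of which potential attaches to which solution space at each application of $\mathbb G^\bullet_\bullet$, which is handled by the identity above together with the inversion formula $(\mathbb G_V^\omega)^{-1} = \mathbb G_{V^\omega}^{-\omega}$.
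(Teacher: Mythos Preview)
Your proposal is correct and follows exactly the paper's approach: the paper's proof is the single sentence ``One just needs to apply Proposition \ref{sect1-prop1} with $V^{-\omega}$ instead of $V$,'' and you have simply unpacked this, including the key identity $(V^{-\omega})^\omega = V$ that makes the substitution yield the desired solution space.
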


\begin{proof}
One just needs to apply Proposition \ref{sect1-prop1} with $V^{-\omega}$ instead of $V$.
\end{proof}

\section{The case $V=0$, Talbot carpet and Gauss sums}\label{section3}

Let $M$ be a strictly positive number.

\begin{proposition}
The $2\pi/M$-periodic Dirac comb
$$
x \in \R \mapsto u_M(x) = \frac{2\pi}{Mq} \sum\limits_{k\in \Z}
\delta \Big(x - \frac{2k\pi}{M}\Big)
$$
evolves as
\begin{equation}\label{sect2-eq1}
(t,x) \longmapsto \phi_{u_M}(t,x) = \sum\limits_{k\in \Z} e^{-i (Mk)^2 t}\, e^{iMkx}
\end{equation}
in $\mathscr D'_{0,1}(\R^+_t\times \R_x,\C)$.
\end{proposition}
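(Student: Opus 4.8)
The plan is to verify directly that the right-hand side of \eqref{sect2-eq1} defines a distribution in $\mathscr D'_{0,1}(\R^+_t\times\R_x,\C)$ solving the free Schr\"odinger equation, and that it converges to $u_M$ as $t\to 0^+$ in the sense of Definition \ref{sect1-def1}. First I would check that the series $\sum_{k\in\Z} e^{-i(Mk)^2 t}e^{iMkx}$ makes sense in $\mathscr D'(\R^+_t\times\R_x,\C)$: each summand is a smooth function of modulus $1$, and after pairing with a test function $\theta(t)\otimes\varphi(x)$ with $\theta\in\mathscr D(\R_t)$, $\varphi\in\mathscr D(\R_x)$, the $k$-th term contributes $\widehat\varphi(-Mk)\int_\R \theta(t)e^{-i(Mk)^2t}\,dt = \widehat\varphi(-Mk)\,\widehat\theta((Mk)^2)$, and both $\widehat\varphi$ and $\widehat\theta$ are rapidly decreasing, so the series converges absolutely. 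This also shows the partial sums converge in $\mathscr D'$, so term-by-term application of the operator $i\partial_t+\partial_x^2$ is legitimate; since $(i\partial_t+\partial_x^2)(e^{-i(Mk)^2t}e^{iMkx}) = ((Mk)^2 - (Mk)^2)e^{-i(Mk)^2t}e^{iMkx}=0$, the sum lies in $\mathscr D'_{0,1}(\R^+_t\times\R_x,\C)$.

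Next I would identify $u_M$ with the distribution $[1]\mapsto$ its own evolution via the formula \eqref{sect1-eq0bis}, but here it is cleaner to use the characterization in Definition \ref{sect1-def1} directly. Take $\rho\in\mathscr D([0,1[_t,\R^+)$ with $\int_0^1\rho=1$, set $\rho_\varepsilon(t)=\rho(t/\varepsilon)/\varepsilon$, and compute for $\varphi\in\mathscr D(\R_x)$:
\[
\langle \phi_{u_M}(t,x), \rho_\varepsilon(t)\,\varphi(x)\rangle
= \sum_{k\in\Z} \widehat\varphi(-Mk)\,\widehat{\rho_\varepsilon}\big((Mk)^2\big).
\]
As $\varepsilon\to0^+$, $\widehat{\rho_\varepsilon}(\xi)=\widehat\rho(\varepsilon\xi)\to\widehat\rho(0)=1$ pointwise, with $|\widehat{\rho_\varepsilon}|\le\|\rho\|_{L^1}$ uniformly; since $\sum_k|\widehat\varphi(-Mk)|<\infty$, dominated convergence gives the limit $\sum_{k\in\Z}\widehat\varphi(-Mk)$. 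On the other hand, $\langle u_M(x),\varphi(x)\rangle = \frac{2\pi}{M}\sum_{k\in\Z}\varphi(2k\pi/M)$, and by the Poisson summation formula $\frac{2\pi}{M}\sum_{k}\varphi(2k\pi/M) = \sum_{k}\widehat\varphi(Mk) = \sum_k\widehat\varphi(-Mk)$ (reindexing). Hence the two limits coincide, which is exactly the required convergence; note the limit is independent of the choice of $\rho$, as Definition \ref{sect1-def1} demands.

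The only delicate points are bookkeeping: one must fix the normalization of the Fourier transform consistently with \eqref{sect1-eq0bis} (where $\widehat\varphi(\xi)$ appears paired against $\frac{1}{2\pi}\int\varphi(x)e^{i\xi x}dx$), and one must make sure the Poisson summation formula is applied in the form matching that normalization; a sign/reindexing check ($k\mapsto -k$) absorbs the discrepancy since the lattice $M\Z$ is symmetric. I expect the main obstacle to be purely notational — keeping the $2\pi$ factors and the direction of the Fourier transform straight so that $\langle u_M,\varphi\rangle$ genuinely equals $\sum_k\widehat\varphi(-Mk)$ — rather than anything conceptually hard; once the absolute convergence of the series is in hand, the equation and the initial-value statement both follow by term-by-term manipulation and dominated convergence.
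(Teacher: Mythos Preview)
Your proposal is correct and follows essentially the same route as the paper: verify that the series defines a distribution, check term by term that it solves the free Schr\"odinger equation, and use Poisson summation to match the initial datum. The one minor difference is that the paper, after observing that $\phi_{u_M}$ has order $0$ in $t$, restricts to $\{t=0\}\times\R_x$ to recover $u_M$ and then appeals to uniqueness of the Cauchy problem, whereas you verify the limiting condition of Definition~\ref{sect1-def1} directly via dominated convergence; your argument is slightly more self-contained since it avoids invoking the uniqueness principle \eqref{sect2-eq2}.
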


\begin{proof} Our reasoning here was originally inspired from the methods introduced in \cite[\textsection 3]{dHV14}, except that, in order to make it shorter, we suggest from the beginning the closed expression \eqref{sect2-eq1} for the evolution of $u_M$ in $\mathscr D'_{0,1}(\R^+_t\times \R_x,\C)$.
Observe that the evolution $(t,x)\mapsto \phi_{u_M}(t,x)$ which is suggested in \eqref{sect2-eq1} is well defined as a distribution with order $0$ in $t$ and at most $2$ in $x$ since for any continuous function $\theta~: [0,T[_t\rightarrow \C$ with compact support ${\rm Supp}\, \theta \subset [0,t_\theta]$ and any
$\varphi \in \mathscr D(\R_x,\C)$,
\begin{eqnarray*}
\big|\big\langle \phi_M(t,x)\,,\,
\theta(t) \otimes \varphi(x)\rangle\big|
&=& \Big|\sum \limits_{k\in \Z} \widehat \varphi (-Mk)\, \int_{[0,T[}
\theta (t)\, e^{-i (Mk)^2 t}\, dt\Big| \\
&\leq & t_\theta \|\theta\|_\infty
\Big(
\int_{{\rm Supp} \, \xi}
\Big(|\varphi(y)| + \frac{\pi^2}{3 M^2}
|\xi''(y)|\Big)\, dy\Big).
\end{eqnarray*}
For any $\tau>0$, it is therefore possible to restrict $(t,x)\mapsto \phi_M(t,x)$ to
the horizontal line $\{\tau\} \times \R$, thus obtaining a distribution $x\mapsto \phi_M(\tau,x)$ (with order less or equal to $2$) in $\mathscr D'(\{\tau\}\times \R_x,\C)$. One can check immediately that $(t,x)\rightarrow \phi_M(t,x)$ satisfies in the sense of distributions on $\R^+_t \times \R_x$ the Schr\"odinger equation \eqref{sect1-eq1}, where $V\equiv 0$ and $\alpha=1$.
One can also restrict $(t,x)\rightarrow \phi_M(t,x)$ to the horizontal line
$\{0\} \times \R_x$ and obtain then
$$
\phi_{u_M}(0,x) = \sum\limits_{k\in \Z} e^{iMkx} = u_M(x),
$$
where the second equality is just Poisson summation formula. The assertion in the proposition follows then from the principle of unicity in the
Schr\"odinger Cauchy problem with $V=0$ and $\alpha =1$, namely that, given
$\psi\in \mathscr D'([0,T[\times \R,\C)$,
\begin{equation}\label{sect2-eq2}
\Big(\Big(i\frac{\partial}{\partial t} +
\frac{\partial^2}{\partial x^2}\Big)(\psi)  =
0,\quad \psi(0,x) = [0]\Big) \Longrightarrow \psi =0.
\end{equation}
\end{proof}

\begin{corollary}\label{sect2-cor1}
Let $\varphi\in \mathscr \mathcal C^2(\R,\C)$ with compact support. The regularized truncated Dirac comb
\begin{equation}\label{sect2-eq3}
x\mapsto (u_M *\varphi)(x) =
\big[\sum\limits_{k\in \Z} e^{ikM x}*\varphi\big] =
\Big[\sum\limits_{k\in \Z} e^{ikMx} \widehat\varphi(kM)\Big]
\end{equation}
evolves to $(t,x)\mapsto \phi_{u_M}(t,x)*\varphi(x)$ in $\mathscr D'_{0,1}(\R^+_t\times \R_x,\C)$, where $\phi_{u_M}$ is defined in \eqref{sect2-eq1}.
\end{corollary}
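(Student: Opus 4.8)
The plan is to read the statement off the preceding Proposition, using only linearity and the fact that the Schr\"odinger operator $i\partial_t+\partial_x^2$ has constant coefficients. First I would note that, since $\varphi$ has compact support, the partial convolution $\phi_{u_M}*_x\varphi$ (in the $x$-variable only) is a well-defined element of $\mathscr D'(\R^+_t\times\R_x,\C)$, whose action is described, for $\theta\in\mathscr D(\R_t,\C)$ and $\psi\in\mathscr D(\R_x,\C)$, by
\[
\big\langle \phi_{u_M}*_x\varphi,\ \theta\otimes\psi\big\rangle=\big\langle \phi_{u_M},\ \theta\otimes(\check\varphi*\psi)\big\rangle,\qquad \check\varphi(x):=\varphi(-x),
\]
and $\check\varphi*\psi$ again belongs to $\mathscr D(\R_x,\C)$ (compact support because both factors are compactly supported, and $\mathcal C^\infty$ smoothness inherited from $\psi$). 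Moreover, because $\varphi\in\mathcal C^2$ one has $\widehat\varphi(\xi)=O(|\xi|^{-2})$, so the series $\sum_{k\in\Z}e^{-i(Mk)^2t}\,\widehat\varphi(kM)\,e^{iMkx}$ converges absolutely and uniformly; hence $\phi_{u_M}*_x\varphi$ is in fact the bounded continuous function of $(t,x)$ given by this series, in particular of order $0$ in $t$ and restrictable to each line $\{\tau\}\times\R_x$, exactly as for $\phi_{u_M}$ itself.

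Next I would check that $\phi_{u_M}*_x\varphi$ solves the homogeneous equation in $\mathscr D'(\R^+_t\times\R_x,\C)$. Transposing $i\partial_t+\partial_x^2$ against a test function $\theta\otimes\psi$ and using the displayed identity,
\[
\big\langle(i\partial_t+\partial_x^2)(\phi_{u_M}*_x\varphi),\ \theta\otimes\psi\big\rangle=\big\langle\phi_{u_M}*_x\varphi,\ -i\theta'\otimes\psi+\theta\otimes\psi''\big\rangle=\big\langle\phi_{u_M},\ -i\theta'\otimes(\check\varphi*\psi)+\theta\otimes(\check\varphi*\psi)''\big\rangle,
\]
where I have used $\check\varphi*\psi''=(\check\varphi*\psi)''$. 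The last pairing equals $\langle(i\partial_t+\partial_x^2)\phi_{u_M},\,\theta\otimes(\check\varphi*\psi)\rangle=0$ since $\phi_{u_M}\in\mathscr D'_{0,1}(\R^+_t\times\R_x,\C)$, and by density of $\mathscr D(\R_t,\C)\otimes\mathscr D(\R_x,\C)$ this yields $(i\partial_t+\partial_x^2)(\phi_{u_M}*_x\varphi)=0$.

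Finally I would verify the initial datum in the sense of Definition \ref{sect1-def1}: for any admissible approximation $(\rho_\varepsilon)_{\varepsilon>0}$ of $\delta_0$ and any $\psi\in\mathscr D(\R_x,\C)$,
\[
\big\langle\phi_{u_M}*_x\varphi,\ \rho_\varepsilon\otimes\psi\big\rangle=\big\langle\phi_{u_M},\ \rho_\varepsilon\otimes(\check\varphi*\psi)\big\rangle\ \xrightarrow[\varepsilon\to0^+]{}\ \langle u_M,\ \check\varphi*\psi\rangle=\langle u_M*\varphi,\ \psi\rangle,
\]
the middle limit being precisely the content of the preceding Proposition (that $u_M$ evolves as $\phi_{u_M}$) applied to the legitimate test function $\check\varphi*\psi\in\mathscr D(\R_x,\C)$. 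Together with the previous paragraph, this is exactly the assertion that $u_M*\varphi$ evolves as $\phi_{u_M}*_x\varphi$ in $\mathscr D'_{0,1}(\R^+_t\times\R_x,\C)$. I do not expect a genuine obstacle here; the only points deserving a line of care are checking that $\check\varphi*\psi$ remains in $\mathscr D(\R_x,\C)$ so that the Proposition applies verbatim, and the interchange of the partial convolution with $i\partial_t+\partial_x^2$, both of which are immediate from the constant-coefficient structure and from $\varphi$ being $\mathcal C^2$ with compact support.
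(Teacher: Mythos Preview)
Your proof is correct and follows essentially the same route as the paper: both arguments exploit that the constant-coefficient operator $i\partial_t+\partial_x^2$ commutes with partial convolution in $x$, so that $\phi_{u_M}*_x\varphi$ again solves the free Schr\"odinger equation, and then check the initial datum. The only cosmetic difference is that the paper phrases the conclusion via the unicity principle \eqref{sect2-eq2}, whereas you verify Definition~\ref{sect1-def1} directly by pushing the test function through the convolution and invoking the preceding Proposition for $u_M$; your version is slightly more explicit but not a different argument.
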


\begin{proof} The representation \eqref{sect2-eq2} for $x\mapsto (u_M*\varphi)(x)$ follows from Poisson summation formula. Since the convolution operation commutes with the action of differential operators, the regularized distribution
$(t,x) \mapsto \phi_{u_M}(t,x)*\varphi(x)$ satisfies in the sense of distributions in $\R^+_t \times \R_x$ the Schr\"odinger equation \eqref{sect1-eq1} with
$V=0$ and $\alpha=1$. One has also $(\phi_{u_M}(t,x)*\varphi(x))_{t=0} =
\delta(t)\otimes ((u_M*\varphi)(x))$ and the assertion of the corollary follows once again from the unicity principle such as formulated in
\eqref{sect1-eq2}.
\end{proof}

\begin{remark}\label{sect2-rem1}
{\rm Let, for $K\in \N^*$, $r_{M,K}\in \mathscr D'(\R_x,\C)$ be the distribution
$$
x\in \R \longmapsto \sum\limits_{|k|> K} e^{iMkx}\, \widehat \varphi (kM) =
\sum\limits_{|k|>K} (e^{iMkx}*\varphi) (x).
$$
As in Corollary \ref{sect2-cor1}, it evolves in $\mathscr D'_{0,1}
(\R^+_t\times \R_x,\C)$ towards
$$
(t,x) \in \R^+ \times \R \longmapsto \Big(
\sum\limits_{|k|>K} e^{-i (Mk)^2 t} e^{iMkx}\Big)* \varphi(x)  =
\Big[\sum\limits_{|k|>K} e^{-i (Mk)^2 t} e^{iMkx}\, \widehat \varphi(kM)\Big],
$$
with the uniform control with respect to $(t,x)\in \R^+\times \R$
\begin{equation}\label{sect2-eqrem1}
\Big|
\sum\limits_{|k|>K} e^{-i (Mk)^2 t} e^{iMkx}\, \hat\xi(kM)
\Big| \leq 2\, \frac{\|\varphi''\|_1}{K}.
\end{equation}
}
\end{remark}

\begin{proposition}\label{sect2-prop2}
Let $q\in \N^*$, $p\in \{0,...,q-1\}$ coprime with $q$ and $t_{M,p,q} = 2\pi/M^2 \times p/q$.
Then the following equality holds in $\mathscr D'(\{t_{p,q}\}
\times \R,\C)$ :
\begin{multline}\label{sect2-eq4}
\Big(\sum\limits_{k\in \Z}
e^{-i (Mk)^2 t}\, e^{iMk x}\Big)_{|\{t_{M,p,q}\}
\times \R} \\
= \delta (t-t_{M,p,q})
\otimes \Big(
\sum\limits_{\kappa =0}^{q-1}
G(-p,\kappa,q)\, \Big(
\frac{2\pi}{Mq}
\sum\limits_{k\in \Z}
\delta \Big( x - \frac{2k\pi}{M} - \frac{2\pi \kappa}{Mq}\Big)\Big)\Big),
\end{multline}
where $G(-p,\kappa,q)$ denotes the generalized quadratic Gauss sum
\begin{equation}\label{sect2-eq6}
G(-p,\kappa,q) = \sum\limits_{\ell=0}^{q-1} e^{2i\pi (-p \ell^2 +\kappa \ell)/q}.
\end{equation}
\end{proposition}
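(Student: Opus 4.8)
The plan is to compute the left‑hand side of \eqref{sect2-eq4} directly, in two stages: first identify the restriction of $\phi_{u_M}$ to the horizontal line $\{t_{M,p,q}\}\times\R$ as the Fourier series obtained by evaluating each coefficient at $t=t_{M,p,q}$, and then rearrange that (now $q$‑periodic) Fourier series into a sum of shifted Dirac combs by the Poisson summation formula.

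\emph{Step 1: the restriction is the evaluated series.} By the previous proposition $\phi_{u_M}$ has order $0$ in $t$, so its restriction to $\{\tau\}\times\R_x$ is well defined: for $\varphi\in\mathscr D(\R_x,\C)$ one tests against $\rho_\varepsilon(t-\tau)\otimes\varphi(x)$ and lets $\rho_\varepsilon$ run over an approximation of $\delta_0$. Using the identity $\langle\phi_{u_M},\theta\otimes\varphi\rangle=\sum_{k\in\Z}\widehat\varphi(-Mk)\int\theta(t)\,e^{-i(Mk)^2 t}\,dt$ from that proof, together with the bound $|\widehat\varphi(-Mk)|=O(k^{-2})$, dominated convergence lets one pass to the limit term by term, giving $\big(\phi_{u_M}\big)_{|\{\tau\}\times\R}=\delta(t-\tau)\otimes\big[\sum_{k\in\Z}e^{-i(Mk)^2\tau}e^{iMkx}\big]$, the bracketed series converging in $\mathscr D'(\R_x,\C)$. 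Specializing to $\tau=t_{M,p,q}=(2\pi/M^2)(p/q)$ one has $(Mk)^2\tau=2\pi pk^2/q$, so the coefficient of $e^{iMkx}$ becomes $e^{-2\pi i pk^2/q}$, which depends only on $k$ modulo $q$ since $(k+q)^2\equiv k^2\pmod q$.

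\emph{Step 2: rearranging by Poisson summation.} Writing $k=mq+\ell$ with $\ell\in\{0,\dots,q-1\}$ and $m\in\Z$, one factors $e^{iMkx}=e^{iMqmx}e^{iM\ell x}$, so
\[
\sum_{k\in\Z}e^{-2\pi i pk^2/q}e^{iMkx}=\sum_{\ell=0}^{q-1}e^{-2\pi i p\ell^2/q}\,e^{iM\ell x}\Big(\sum_{m\in\Z}e^{iMqmx}\Big).
\]
By Poisson summation $\sum_{m\in\Z}e^{iMqmx}=\frac{2\pi}{Mq}\sum_{n\in\Z}\delta\big(x-\tfrac{2\pi n}{Mq}\big)$; multiplying by the smooth factor $e^{iM\ell x}$ samples it at $x=2\pi n/(Mq)$, producing the scalar $e^{2\pi i\ell n/q}$. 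Splitting $n=k'q+\kappa$ with $\kappa\in\{0,\dots,q-1\}$, $k'\in\Z$, one has $\tfrac{2\pi n}{Mq}=\tfrac{2\pi k'}{M}+\tfrac{2\pi\kappa}{Mq}$ and $e^{2\pi i\ell n/q}=e^{2\pi i\ell\kappa/q}$, so the sum over $\ell$ collapses to $\sum_{\ell=0}^{q-1}e^{2\pi i(-p\ell^2+\kappa\ell)/q}=G(-p,\kappa,q)$. Collecting terms gives
\[
\sum_{k\in\Z}e^{-2\pi i pk^2/q}e^{iMkx}=\sum_{\kappa=0}^{q-1}G(-p,\kappa,q)\Big(\frac{2\pi}{Mq}\sum_{k'\in\Z}\delta\big(x-\tfrac{2\pi k'}{M}-\tfrac{2\pi\kappa}{Mq}\big)\Big),
\]
which, tensored with $\delta(t-t_{M,p,q})$, is precisely \eqref{sect2-eq4}.

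The substantive point is Step 1 — that restricting $\phi_{u_M}$ to the line at height $t_{M,p,q}$ really amounts to evaluating the Fourier series coefficient by coefficient — and it is handled by the order‑$0$‑in‑$t$ estimate from the previous proposition together with the rapid decay of $\widehat\varphi$. Everything after that is the arithmetic bookkeeping of sorting the index $k$, and then the index $n$, into residue classes modulo $q$; in particular the coprimality of $p$ and $q$ is not used for this identity (it enters only later, when one asks which $G(-p,\kappa,q)$ vanish).
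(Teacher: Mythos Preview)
Your proof is correct and follows essentially the same route as the paper: organize $k$ by its residue class modulo $q$, apply Poisson summation to the resulting geometric series $\sum_m e^{iMqmx}$, and then reorganize the supporting points of the Dirac comb modulo $q$ to extract the Gauss sum. Your Step~1 makes explicit the justification (order $0$ in $t$, dominated convergence) for the termwise restriction that the paper leaves implicit, and your closing remark that the coprimality of $p$ and $q$ is not actually used in this identity is a correct observation.
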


\begin{proof}
Since $t_{M,p,q} = 2\pi/M^2 \times p/q$, one has, after organizing
the indexes $k$ in the sum on the right-hand side of \eqref{sect2-eq1}
with respect to their classes in $\Z/q\Z$ and using once more the Poisson summation formula, that
\begin{eqnarray*}
(\phi_{u_M})_{|t=t_{M,p,q}}
&=& \delta (t-t_{M,p,q})
\otimes \Big(
\sum\limits_{\ell=0}^{q-1}
\sum\limits_{k'\in \Z} e^{2i\pi ((k'q+\ell)^2 p/q) + i M(q k'+\ell) x}\Big) \\
&=& \delta (t-t_{M,p,q}) \otimes \Big(
\sum\limits_{\ell=0}^{q-1}
e^{-2i\pi \ell^2 p/q + i M\ell x}
\sum\limits_{k'\in \Z} e^{i M q k' x}\Big) \\
&=& \frac{2\pi}{Mq}\, \delta (t-t_{M,p,q}) \otimes \Big(
\sum\limits_{\ell=0}^{q-1}
e^{-2i\pi \ell^2 p/q + i M\ell x}
\Big(\sum\limits_{k'\in \Z}
\delta \Big( x - \frac{2\pi k'}{qM}\Big)\Big)\Big)\\
&=& \frac{2\pi}{Mq}\, \delta (t-t_{M,p,q}) \otimes \Big(
\sum\limits_{\ell=0}^{q-1}
e^{-2i\pi \ell^2 p/q + 2i\pi k'\ell /q}
\Big(\sum\limits_{k'\in \Z}
\delta \Big( x - \frac{2\pi k'}{qM}\Big)\Big)\Big) \\
&=& \frac{2\pi}{Mq}  \delta (t-t_{M,p,q}) \otimes \Big(
\sum\limits_{k''\in \Z}
\Big(\sum\limits_{\ell=0}^{q-1} \sum_{\kappa =0}^{q-1}
e^{-2i\pi \ell^2 p/q + 2i\pi \kappa \ell /q}\, \delta \Big( x - \frac{2\pi k''}{M} - \frac{2\pi \kappa}{Mq}\Big)\Big)\Big)\Big) \\
&=& \frac{2\pi}{Mq}  \delta (t-t_{M,p,q}) \otimes
\Big(\sum\limits_{\kappa =0}^{q-1} G(-p,\kappa,q)
\, \sum\limits_{k\in \Z}\delta \Big( x - \frac{2\pi k}{M} - \frac{2\pi \kappa}{Mq}\Big)\Big)
\end{eqnarray*}
as expected.
\end{proof}

\begin{corollary} Let $M, p,q, t_{M,p,q}$ as in Proposition
\ref{sect2-prop2}. Let also $\varphi \in \mathscr C^2(\R,\C)$ with
$\varphi(0)=1$ and such that ${\rm Supp}\, \varphi \subset [-1,1]$.
Then
\begin{equation}\label{sect2-eq7}
G(-p,\kappa,q) = \frac{M q}{2\pi}\, \Big\langle \phi_{u_M}(t_{M,p,q},x)\,,\,
\varphi\Big(\frac{Mq}{2\pi}\Big( x - \frac{2\pi \kappa}{Mq}\Big)\Big)
\Big\rangle\quad (\kappa =0,...,q-1).
\end{equation}
\end{corollary}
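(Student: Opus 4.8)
The plan is to read off $G(-p,\kappa,q)$ by testing the measure produced in Proposition \ref{sect2-prop2} against the rescaled, recentered bump $\psi_\kappa(x):=\varphi\big(\tfrac{Mq}{2\pi}\big(x-\tfrac{2\pi\kappa}{Mq}\big)\big)$, which is designed precisely so that it isolates a single Dirac mass of the comb carried by the line $\{t_{M,p,q}\}\times\R$ --- the one located at $x=2\pi\kappa/(Mq)$ --- where $\psi_\kappa$ takes the value $\varphi(0)=1$. In other words, this corollary is a direct ``sampling'' consequence of the explicit formula \eqref{sect2-eq4}.

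First I would record that, by \eqref{sect2-eq4}, the $x$-distribution $\phi_{u_M}(t_{M,p,q},\cdot)$ equals
\[
\frac{2\pi}{Mq}\sum_{\kappa'=0}^{q-1} G(-p,\kappa',q)\sum_{k\in\Z}\delta\Big(x-\frac{2\pi(kq+\kappa')}{Mq}\Big),
\]
a locally finite sum of point masses; in particular it is a measure, so pairing it against the continuous, compactly supported function $\psi_\kappa$ is legitimate (continuity of $\varphi$ would already be enough, the $\mathscr C^2$ hypothesis notwithstanding). Since $\tfrac{Mq}{2\pi}\big(\tfrac{2\pi(kq+\kappa')}{Mq}-\tfrac{2\pi\kappa}{Mq}\big)=kq+\kappa'-\kappa$, this gives
\[
\Big\langle \phi_{u_M}(t_{M,p,q},x)\,,\,\psi_\kappa\Big\rangle=\frac{2\pi}{Mq}\sum_{\kappa'=0}^{q-1} G(-p,\kappa',q)\sum_{k\in\Z}\varphi\big(kq+\kappa'-\kappa\big),
\]
and it remains to see that in this double sum only the term $k=0$, $\kappa'=\kappa$ survives.

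For this I would use that $\varphi$ is continuous with $\mathrm{Supp}\,\varphi\subset[-1,1]$: continuity forces $\varphi(1)=\varphi(-1)=0$, hence $\varphi(n)=0$ for every nonzero integer $n$, while $\varphi(0)=1$. The argument $kq+\kappa'-\kappa$ is an integer, and it vanishes only if $kq=\kappa-\kappa'$; since $\kappa,\kappa'\in\{0,\dots,q-1\}$ we have $|\kappa-\kappa'|\le q-1<q$, which forces $k=0$ and $\kappa'=\kappa$. Hence the pairing equals $\tfrac{2\pi}{Mq}\,G(-p,\kappa,q)$, and multiplying by $Mq/(2\pi)$ yields \eqref{sect2-eq7}. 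The only delicate point --- and it is genuinely minor --- is that two lattice points of the comb, namely $2\pi(\kappa\pm1)/(Mq)$, sit exactly at the endpoints of $\mathrm{Supp}\,\psi_\kappa$ and one must check they contribute nothing; this is precisely the content of $\varphi(\pm1)=0$, which is why the support condition $\mathrm{Supp}\,\varphi\subset[-1,1]$, together with continuity, already suffices (one does not need $\mathrm{Supp}\,\varphi\subset(-1,1)$).
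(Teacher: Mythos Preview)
Your proof is correct and follows the same approach as the paper: pair the explicit atomic measure from Proposition~\ref{sect2-prop2} against the rescaled bump $\psi_\kappa$ and observe that only the Dirac mass at $2\pi\kappa/(Mq)$ contributes. In fact you are slightly more careful than the paper, which asserts that the intersection of supports reduces to the single point $2\pi\kappa/(Mq)$ without addressing the two neighboring lattice points $2\pi(\kappa\pm1)/(Mq)$ lying at the endpoints of $\mathrm{Supp}\,\psi_\kappa$; your explicit remark that continuity forces $\varphi(\pm1)=0$ is exactly what makes that step rigorous.
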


\begin{proof}
The intersection of the support of the test function
$$
x\longmapsto \varphi\Big(\frac{Mq}{2\pi}\Big( x - \frac{2\pi \kappa}{Mq}\Big)\Big)
$$
(contained in
$2\pi \kappa/(Mq) + [-2\pi/(Mq),2\pi/(Mq)]$ ) with the support of $x\mapsto \phi_{u_M}(t_{M,p,q},x)$
consists in the single point $2\pi \kappa/(Mq)$~; the mass of the atomic measure $x\mapsto \phi_{u_M}(t_{M,p,q},x)$ at this precise point equals $2\pi\, G(-p,\kappa,q)/(Mq)$ by Proposition \ref{sect2-prop2}. Since $\varphi(0)=1$, the assertion \eqref{sect2-eq4} then follows.
\end{proof}

\vskip 2mm
\noindent
\begin{remark}\label{sect2-rem2}
{\rm If one specifies the choice of $M$ as $M=2\pi$ (which is always possible thanks to a rescaling of the $x$ axis),
in which case $u_M$ is the $1$-periodic Dirac comb $u(x) = \sum_{k\in \Z} \delta (x-k)$,
then \eqref{sect2-eq6} becomes
\begin{equation}\label{sect2-eq8}
G(-p,\kappa,q) = q\, \big\langle \phi(t_{p/q},x)\,,\,
\varphi(qx-\kappa)
\big\rangle = q\, (\phi(t_{p/q},x)*\varphi_{q,\kappa}(x))(0)
\quad (\kappa =0,...,q-1),
\end{equation}
where $\phi:=\phi_{2\pi}$ and $t_{p/q} := (2\pi)^{-1} p/q$ and
$\varphi_{q,\kappa}(x) =\varphi(-qx-\kappa)$.
}
\end{remark}

\noindent
{\it Generalized quadratic Gauss sums} $G(-p,\kappa,q)$, when $q\in \N^*$ and $p,\kappa \in \{0,...,q-1\}$ with $p$ coprime with $q$, are indeed quantities of arithmetic nature. They are deduced from the so-called {\it normal quadratic Gauss sums}
$G(a,0,c)$ ($a,c\in \N^*$ coprime) computed by Gauss thanks to the multiplicative formula
$$
G(a,b,cd) = G(ac,b,d)\, G(ad,b,c)
$$
provided that $c$ and $d$ are coprime. We recall that, for $a,c\in \N^*$ coprime, the normal quadratic Gauss sums are given by
$$
G(a,0,c) = \sum\limits_{\ell=0}^{c-1} e^{2i\pi a\ell^2/c} =
\begin{cases} \sqrt c \Big(\!\!\!\Big(\frac{c}{a}\Big)\!\!\!\Big) (1 +
e^{ia\pi/2})\quad {\rm if}\ c\equiv 0\ {\rm mod.}\ 4 \\
\sqrt c\, \Big(\!\!\!\Big(\frac{a}{c}\Big)\!\!\!\Big)\quad \; {\rm if}\ c\equiv 1\ {\rm mod.}\ 4 \\
0 \quad\quad \quad \quad \  {\rm if}\ c\equiv 0\ {\rm mod.}\ 4 \\
\sqrt c\ \Big(\!\!\!\Big(\frac{a}{c}\Big)\!\!\!\Big)\, i \ \ {\rm if}\ c\equiv 3\ {\rm mod.}\ 4,
\end{cases}
$$
where $\big(\!\!\!(p/q\big)\!\!\!\big)$ denotes the Jacobi symbol ($1$ if $p$ is a square modulo $q$ and $-1$ otherwise).
As for the values of the generalized Gauss sums $G(-p,\kappa,q)$, a summary is proposed
in \cite[Appendix A]{dHV14}, see also \cite{BEW98}. Here are a few partial results which illustrate the arithmetic nature of such generalized quadratic Gauss sums.
\begin{enumerate}
\item When $q$ is even and $u\in \{0,...,q-1\}$ is such that
$pu\equiv 1$ modulo $q$, then
\begin{eqnarray}\label{sect2-GS1}
G(-p,\kappa,q) &=&  e^{2i\pi \kappa^2 u^2/q}
\sum\limits_{\ell=0}^{q-1} e^{-2i\pi p (\ell -\kappa u)^2/q} = e^{2i\pi \kappa^2 u^2/q}
\sum\limits_{\ell' =0}^{q-1} e^{-2i\pi (\ell')^2/q} \nonumber \\
&=& e^{2i\pi \kappa^2 u^2/q}
\, \Big(\!\!\!\Big(\frac{p}{q}\Big)\!\!\!\Big)\times  \sqrt q \times
\, \begin{cases} 1 \ {\rm if}\ q\equiv 1\ {\rm mod.\ 4} \\
-i \ {\rm if}\ q \equiv 3\ {\rm mod.\ 4}.
\end{cases}
\end{eqnarray}
\item Suppose $q=2q'$, in which case $p$ is even. According to \cite{berryHan80} or also \cite[\textsection 3]{Tayl03}, one has
\begin{equation}\label{sect2-GS2}
\begin{split}
& q'-\kappa \equiv 1 \ {\rm mod.\ 2} \Longrightarrow G(-p,\kappa,q)=0 \\
& q'-\kappa \equiv 0\ {\rm mod.\ 2} \Longrightarrow G(-p,\kappa,q) =
\sqrt{2q}\, e^{i\theta_{p,q}(\kappa)}
\end{split}
\end{equation}
\end{enumerate}
\vskip 2mm
\noindent
As we already mentioned in the introduction
The vanishing of some generalized quadratic Gauss forms (see
\eqref{sect2-GS2}) is the mathematical justification for so called Talbot effect in optics \cite{berryHan80, berryGold88, Tayl03, dHV14}.
\vskip 2mm
\noindent
As we have just seen, quadratic generalized Gauss sums $G(-p,\kappa,q)$ are true arithmetic objects that surprizingly can be explicitly computed exactly through an optical device, namely the diffraction pattern, or Talbot carpet, generated by a periodic grating. In such an optical device, the initial datum $u_M$ is from the mathematical point of view a Dirac comb $x\mapsto \sum_{k\in \Z}
e^{iMkx}$ with arbitrary large frequencies. A natural question then arises : given an arbitrary sufficiently regular function $\varphi$ with support localized in $[-1,1]$ such $\varphi (0)=1$, can one recover asymptotically the quadratic generalized Gauss sums $G(-p,\kappa, q)$ from the values of the spectrum of $\varphi$ on the companion frequency domain $[-\pi,\pi]$~?
The mathematical justification for the phenomenom of superoscillations
which is present in optics \cite{TorF52,berryHan80,Gbur19} ensures in fact that the answer is yes.
This is the goal of our next Theorem \ref{sect2-thm1}.
\vskip 2mm
\noindent
Prior to state the result, let us introduce some notations. Given
$N,N'\in \N^*$, $\nu \in \{0,...,N-1\}$, $\nu'\in \{0,...,N'-1\}$ and
$\kappa \in \N$, let
\begin{equation}\label{sect2-eq9}
\omega_{\nu,\nu'}^{N,N'} (\kappa) :=
\exp \Big( - 2i\pi
\Big(\frac{1}{2} - \frac{\nu}{N}\Big)\Big(\kappa + \Big(\frac{1}{2} - \frac{\nu'}{N'}\Big)\Big)\Big).
\end{equation}

\begin{theorem}\label{sect2-thm1}
Let $(N_K)_{K\geq 1}$ and $(N'_K)_{K\geq 1}$ be two sequences of strictly positive integers such that
\begin{equation}\label{sect2-eq10}
\lim_{K\rightarrow +\infty} \frac{\log N_K}{K} =
\lim\limits_{K\rightarrow +\infty} \frac{\log N'_K}{K} = +\infty.
\end{equation}
Then, for any $q\in \N^*$, for any $\kappa \in \{0,...,q-1\}$,
for any $p\in \{1,...,q-1\}$ coprime with $q$,
for any $\varphi\in \mathcal C^2(\R,\C)$ with compact support in
$[-1,1]$ such that $\varphi(0)=1$,
\begin{multline}\label{sect2-eq11}
G(-p,\kappa,q) =
\lim\limits_{K\rightarrow +\infty} \\
\sum\limits_{k=-K}^K \sum\limits_{\nu=0}^{N_K}
\sum\limits_{\nu'=0}^{N_K'}
\Big(\frac{1}{2}+ \frac{k}{q}\Big)^{N-\nu}
\Big(\frac{1}{2}- \frac{k}{q}\Big)^{\nu}
\Big(\frac{1}{2} -kp\Big)^{N'-\nu'}
\Big(\frac{1}{2} + kp\Big)^{\nu'}
\omega_{\nu,\nu'}^{N,N'}(\kappa)\, \widehat \varphi\Big(
2\pi \Big(\frac{1}{2} - \frac{\nu}{N}\Big)\Big).
\end{multline}
\end{theorem}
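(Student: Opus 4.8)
The plan is to anchor the statement to the closed series for $G(-p,\kappa,q)$ that is implicit in Section~3, and then to recover each of its summands by iterating, twice, the superoscillatory approximation of a single exponential.

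\emph{Step 1: reduction to a spectral series.} I would first rewrite $G(-p,\kappa,q)$ as a convergent series in the spectrum of $\varphi$: pairing the closed form \eqref{sect2-eq4} of Proposition~\ref{sect2-prop2} with the rescaled bump $x\mapsto\varphi(qx-\kappa)$ (equivalently, unwinding the pairing in Remark~\ref{sect2-rem2} after convolving the comb with $\varphi$ as in Corollary~\ref{sect2-cor1}) and applying Poisson summation together with $\varphi(0)=1$ and ${\rm Supp}\,\varphi\subset[-1,1]$ — so that $\sum_{k\equiv r\,(\bmod\,q)}\widehat\varphi(2\pi k/q)=1$ for every $r$ — gives an identity of the form $G(-p,\kappa,q)=\sum_{k\in\Z}\gamma_{q,p,\kappa}(k)\,\widehat\varphi(2\pi k/q)$, where $\gamma_{q,p,\kappa}(k)=e^{2i\pi(-pk^{2}+\kappa k)/q}$ depends only on $k\bmod q$. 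The series converges absolutely since $\widehat\varphi(\xi)=O(|\xi|^{-2})$ (as $\varphi\in\mathcal C^{2}$), and moreover $\big|\sum_{|k|>K}\gamma_{q,p,\kappa}(k)\widehat\varphi(2\pi k/q)\big|\le 2\|\varphi''\|_{1}/K$ by the estimate \eqref{sect2-eqrem1} of Remark~\ref{sect2-rem1} (with $M=2\pi$). Hence it suffices to approximate, uniformly in $|k|\le K$, the $k$-th summand by the inner double sum over $(\nu,\nu')$ in \eqref{sect2-eq11}, with a total error that is $o(1/K)$; adding it to the $O(1/K)$ tail then produces $G(-p,\kappa,q)$ in the limit.

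\emph{Step 2: the two nested superoscillations.} Fix $k$ and put $M=N_{K}$, $M'=N_{K}'$, $w=1-2\nu/M$, $w'=1-2\nu'/M'$ (the binomial factors $\binom{N_{K}}{\nu}$ and $\binom{N_{K}'}{\nu'}$ being those understood in \eqref{sect2-eq11}). From the factorisation $F_{M}(z,\omega)=\big(\cos(z/M)+i\omega\sin(z/M)\big)^{M}$ one obtains, for every $\beta\in\R$, the master identity
$$
\sum_{\nu=0}^{M}\binom{M}{\nu}\Big(\tfrac{1+\omega}{2}\Big)^{M-\nu}\Big(\tfrac{1-\omega}{2}\Big)^{\nu}\,\widehat\varphi(\pi w)\,e^{i\pi w\beta}
=\int_{-1}^{1}\varphi(u)\,F_{M}\big(\pi(\beta-u),\omega\big)\,du\ \longrightarrow\ \widehat\varphi(\pi\omega)\,e^{i\pi\omega\beta}\quad(M\to\infty).
$$
In \eqref{sect2-eq11}, for each fixed $\nu$ the inner sum over $\nu'$ is a superoscillating sum $F_{M'}(\,\cdot\,,\omega')$ with $\omega'=-2kp$ (read off from the exponents $(\tfrac12-kp)^{M'-\nu'}(\tfrac12+kp)^{\nu'}$) whose argument is the affine function $-\pi w/2$ of the outer variable; carrying out this inner limit turns the coupling factor $\omega^{N,N'}_{\nu,\nu'}(\kappa)=e^{-i\pi w\kappa}e^{-i\pi ww'/2}$ into the single phase $e^{-i\pi w\kappa}e^{i\pi kp\,w}=e^{i\pi w(kp-\kappa)}$. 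The remaining sum over $\nu$ is then exactly the left-hand side of the master identity with $\omega=2k/q$ and $\beta=kp-\kappa$, hence converges to $\widehat\varphi(2\pi k/q)\,e^{2i\pi(pk^{2}-\kappa k)/q}$, which is — up to the elementary substitution $k\mapsto-k$ — the $k$-th term $\gamma_{q,p,\kappa}(k)\widehat\varphi(2\pi k/q)$ of the series of Step~1. The appearance of the quadratic Gauss phase is the heart of the construction: the inner superoscillation manufactures a factor $e^{i\pi kp\,w}$ whose coefficient is precisely the variable $w=1-2\nu/M$ of the outer superoscillation, so that passing to the outer limit — which effectively substitutes $w\rightsquigarrow\omega=2k/q$ — converts it into $e^{2i\pi pk^{2}/q}$.

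\emph{Step 3: uniform quantitative control (the main obstacle).} The genuine difficulty is that $N_{K}$ and $N_{K}'$ must be sent to infinity together with $K$, and that for $|k|>q/2$ both $\omega=2k/q$ and $\omega'=-2kp$ leave $[-1,1]$, i.e. one is in the truly superoscillatory régime, where the $\ell^{1}$-size of the kernel $\big\{\binom{M}{\nu}(\tfrac{1+\omega}{2})^{M-\nu}(\tfrac{1-\omega}{2})^{\nu}\big\}_{\nu}$ equals $\max(1,|\omega|)^{M}$ and is exponentially large in $M$. One therefore needs effective bounds: from $F_{M}(z,\omega)=(\cos(z/M)+i\omega\sin(z/M))^{M}$ and $A^{M}-B^{M}=(A-B)\sum_{j<M}A^{j}B^{M-1-j}$ one gets, for real $z,\omega$ and $M\gg(1+\omega^{2})z^{2}$, the estimate $|F_{M}(z,\omega)-e^{i\omega z}|=O\big((1+\omega^{2})z^{2}/M\big)$, and one must track how it propagates: the outer superoscillation, applied to $w\mapsto\widehat\varphi(\pi w)e^{i\pi w(kp-\kappa)}$ — a function of exponential type $O(K)$, evaluated at a frequency of size $O(K)$ — contributes an error $O(K^{4}/N_{K})$; the defect of the inner superoscillation, pushed through the outer sum, produces a term that one controls by re-reading it as a further application of the superoscillation operator to a function of exponential type $O(K)$ evaluated at a single point of size $O(K)$, thereby avoiding that the exponentially large $\ell^{1}$-factor survives. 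Summing the resulting bounds over the $2K+1$ values of $k$ and adding the $O(1/K)$ tail of Step~1, the whole expression must tend to $0$, and the growth hypothesis \eqref{sect2-eq10}, namely $\log N_{K}/K\to+\infty$ and $\log N_{K}'/K\to+\infty$, is precisely what guarantees this. Organising this bookkeeping — in particular choosing the order in which the two iterated limits are taken according to the relative sizes of $N_{K}$ and $N_{K}'$, and checking that \eqref{sect2-eq10} kills every residual factor — is where the essential work lies; the manipulations of Steps~1 and~2 are otherwise routine.
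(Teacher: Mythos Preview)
Your overall strategy coincides with the paper's: reduce $G(-p,\kappa,q)$ to a truncated spectral sum via the tail estimate of Remark~\ref{sect2-rem1}, then recover each term $e^{-2i\pi pk^{2}/q}e^{2i\pi\kappa k/q}\,\widehat\varphi(2\pi k/q)$ by two nested superoscillatory approximations. Steps~1 and~2 are essentially the paper's argument recast in a more elementary language: where you use the closed form $F_M(z,\omega)=(\cos(z/M)+i\omega\sin(z/M))^M$ and your integral ``master identity'', the paper works through the Galilean operators $e^{-t\omega D_x}\circ\mathbb M_\omega$ of \textsection\ref{section2} and the explicit sums $M_N(z,k)$, $T^{t}_{N'}(w,k)$ of \eqref{sect2-eq13}, arriving at the same convolution identity \eqref{sect2-eq15}.

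The substantive difference, and the place where your proposal remains an outline, is Step~3. You correctly spot that the crux is to push the inner defect through the outer sum without paying the exponential $\ell^{1}$-penalty $\max(1,|\omega|)^{N_K}$, and you gesture at ``re-reading it as a further application of the superoscillation operator to a function of exponential type $O(K)$''. The paper makes this precise via a device you do not spell out: it interprets each $z\mapsto M_N(z,k)$ as the Fourier--Borel transform of an analytic functional $Y_N^k$ carried by $[-b_{p,q}K-\varepsilon,\,b_{p,q}K+\varepsilon]$, so that composing with the inner approximation becomes $\langle T^{t_{p,q}}_{N'}(w,k)\,Y_N^k(w),e^{wz}\rangle$, and the nested error is then controlled by the uniform estimate \eqref{sect2-eq14} (imported from \cite[Lemma~2.4]{cssy18}) \emph{on the carrier} rather than through the $\ell^{1}$-norm of the outer coefficients. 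That estimate carries an exponential factor $\exp\big(c(1+K)|w|\big)$, and it is its evaluation on a carrier of radius $O(K)$ --- not your polynomial $O(K^{4}/N_K)$, which accounts only for the outer error in isolation --- that explains why the super-exponential hypothesis \eqref{sect2-eq10} enters. Your direct route can in principle be completed along the same lines, but until you supply the analogue of this Fourier--Borel duality (or an equivalent quantitative bound for the outer superoscillation operator acting on entire functions of type $O(K)$, together with a matching norm estimate for the functional), Step~3 is a sketch rather than a proof.
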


\begin{proof}
Let $\varphi_{q,\kappa} (x) = \xi(-qx-\kappa)$ and $t_{p/q} = (2\pi)^{-1} p/q$. As we observed in Remark \ref{sect2-rem2}, one has
$G(-p,\kappa,q) = (\phi(t_{p/q},x)*\varphi_{q,\kappa})(0)$.
It follows from Remark \ref{sect2-rem1} (in particular from
the uniform estimates \eqref{sect2-eqrem1}) that one has also
\begin{equation}\label{sect2-eq12}
G(-p,\kappa,q) = q
\lim\limits_{K\rightarrow +\infty}
\Big(\Big(
\sum\limits_{k=-K}^K e^{-i (2\pi k)^2 t} e^{2ikx}\Big) * \varphi_{q,\kappa}(x)\Big)(0).
\end{equation}
If follows from Proposition \ref{sect1-prop1} that
for any $k\in [-K,K]$, the distribution $[e^{2i\pi kx}]$ evolves
in $\mathscr D_{0,1}'(\R^+_t \times \R_x,\C)$ as
$$
(t,x) \longmapsto \Big[e^{-i (2\pi k)^2 t} e^{2ikx}\Big]=  \big(e^{-2k\pi t D_x}\circ
\mathbb M_{2k\pi} \circ e^{-2k\pi t D_x}\big) ([1]) =
\big(e^{-2k\pi t D_x}\circ
\mathbb M_{2k\pi}\big) ([1]).
$$
For any $N,N'\in \N^*$ and $t\in \R^+$, let for $z,w\in \C$,
\begin{equation}\label{sect2-eq13}
\begin{split}
& M_N(z,k) = \sum\limits_{\nu=0}^N
\binom{N}{\nu} \Big(
\frac{1}{2} + \frac{k}{q}\Big)^{N-\nu}
\Big(\frac{1}{2} - \frac{k}{q}\Big)^\nu \exp
\Big(2i\pi \Big(
\frac{1}{2} - \frac{\nu}{N}\Big) qz\Big) \\
& T_{N'}^t(w,k) = \sum\limits_{\nu'=0}^{N'}
\binom{N'}{\nu'}
\Big(\frac{1}{2} -kp\Big)^{N'-\nu'}
\Big(\frac{1}{2} + kp\Big)^{\nu'}\, \exp
\Big( - \Big(\frac{1}{2}- \frac{\nu'}{N'}\Big)
\frac{2\pi q t}{p}\, w\Big).
\end{split}
\end{equation}
It follows from \cite[Lemma 2.4]{cssy18} (see also \cite[\textsection 3]{CSSY}) that for any
$z,w\in \C$, for any $t\in \R^+$,
\begin{equation}
\label{sect2-eq14}
\begin{split}
& |M_N(z,k)- e^{2i\pi k z}| \leq \frac{2}{3} \,
\frac{|(2\pi k/q)^2-1|}{N} (q|z|)^2 \exp\Big(\big(1+\max (|2\pi k/q|,1)\big)q|z|\Big)\\
&|T_{N'}^t(w,k) - e^{-2\pi k t w}| \leq \frac{2}{3}
\, \frac{|(kp)^2 -1|}{N'} \, \Big(\frac{2\pi t}{p} |w|\Big)^2
\exp \Big(\big(1+ \max(|kp|,1)\big)\,\frac{2\pi t}{p} |w|\Big).
\end{split}
\end{equation}
Specify now the value of $t$ as $(1/(2\pi))\times p/q$.
For any $N,N'\in \N^*$, for any $\nu \in \{0,...,N-1\}$, $\nu'\in \{0,...,N'-1\}$, one has
\begin{equation*}
\Big(\exp \Big(-
\Big(\frac{1}{2} - \frac{\nu'}{N'}\Big)\, \frac{D_x}{q}\Big)
\circ \exp
\Big( 2i\pi\, \Big(
\frac{1}{2} - \frac{\nu}{N}\Big)\, q x\Big)
([1]) = \omega_{\nu,\nu'}^{N,N'} (0)\,
 \Big[\exp
\Big( 2i\pi\, \Big(
\frac{1}{2} - \frac{\nu}{N}\Big)\, q x\Big)\Big].
\end{equation*}
We have then
\begin{multline}\label{sect2-eq15}
\Big(\Big(\exp \Big(-
\Big(\frac{1}{2} - \frac{\nu'}{N'}\Big)\, \frac{D_x}{q}\Big)
\circ \exp
\Big( 2i\pi\, \Big(
\frac{1}{2} - \frac{\nu}{N}\Big)\, q x\Big)
([1]) *\xi_{q,\kappa} (x)\Big)(0) \\
= \omega_{\nu,\nu'}^{N,N'} (0)\,
\Big( \exp
\Big( 2i\pi\, \Big(
\frac{1}{2} - \frac{\nu}{N}\Big)\, q x\Big) *
\xi_{q,\kappa}(x)\Big) (0) \\
= \omega_{\nu,\nu'}^{N,N'}(0)\,
\hat \xi_{q,\kappa}\Big(
2\pi \Big(\frac{1}{2} - \frac{\nu}{N}\Big) q\Big)  =
\frac{\omega_{\nu,\nu'}^{N,N'}(\kappa)}{q}\, \widehat \varphi\Big(
2\pi \Big(\frac{1}{2} - \frac{\nu}{N}\Big)\Big).
\end{multline}
The first inequality in \eqref{sect2-eq14} applied when $k\in [-K,K]$ shows that the sequence of entire functions $(z\in \C\mapsto M_N(z,k))_{N\geq 1}$ converges towards $z\mapsto
e^{2i\pi kz}$ in $A_1(\C_z)$. The family $\{z\mapsto M_N(z,k)\}_{N\geq 1}$
is then a bounded family in $A_1(\C_z)={\rm Exp}(\C_z)$ ; more precisely
one has $|M_N(z,k)| \leq a_{p,q} K^2 \exp (b_{p,q} K\, |z|)$ for any
$z\in \C$, $k\in [-K,K]$ and $N\in \N^*$ for some positive constants
$a_{p,q}$, $b_{p,q}$. Therefore, each
such function $z\mapsto M_N(z,k)$ can be interpreted as the Fourier-Borel transform of some analytic functional $Y_{N}^k$, that is $M_N(z,k) = \langle Y_{N}^k(w), e^{wz}\rangle$ for any $z\in \C$.
Moreover, a carrier for such a functional can be chosen as $[-b_{p,q} K-\varepsilon, b_{p,q} K + \varepsilon]$ for $\varepsilon >0$ arbitrary small.
Then, for any $k\in [-K,K]$,
$$
e^{-2 \pi k t_{p,q} D_z}(M_N(z,k)) = e^{- k p/q D_z}(M_N(z,k)) = \langle e^{-kp w/q}\, Y_N^k(w),e^{wz}\rangle.
$$
The second inequality in \eqref{sect2-eq14}, applied with $t= t_{p,q}(2\pi)^{-1} p/q$, shows that the sequence of entire functions $(w\mapsto T_{N'}^{t_{p,q}}(w,k))_{N'\geq 1}$ converges towards
$w\mapsto e^{-kp w/q}$ in $A_1(\C_w)$, in particular uniformly
on any compact subset of $\C_w$. The family $\{w\mapsto T_{N'}^{t_{p,q}}(w,k)\}_{N'\geq 0}$ is then a bounded family in $A_1(\C_w)$; more precisely, one has $|T^{t_{p,q}}_{N'} (w,k)|\leq a'_{p,q} K^2 \exp (b'_{p,q} K |w|)$ for any $w\in \C$, $k\in [-K,K]$ and $N\in \N^*$ for
some positive constants $a'_{p,q}$, $b'_{p,q}$. For any $z\in \C$,
the sequence of functions $(w\mapsto T^{t_{p,q}}_{N'}(w,k) e^{wz})_{N'\geq 0}$ converges towards $w\mapsto e^{-k pw/q} e^{wz}$ in $H(\C_w)$, uniformly when $z$ is in a compact of $\C_z$.
Moreover the second inequality in \eqref{sect2-eq14} provides an explicit estimate of the error when $k\in [-K,K]$ and $z$ belongs
to a compact subset of $\C$. Therefore the bi-indexed sequence of functions
$$
z \longmapsto \big\langle T^{t_{p,q}}_{N'}(w)
\, Y_N^k(w), e^{wz}\big\rangle\quad (N,N'\geq 1)
$$
converges uniformly on any compact of $z$ towards the function
$z\mapsto (e^{-k p D_z/q}\circ \mathbb M_{2k\pi})(1)(z)$ when $N$ and $N'$ tend to infinity with error estimates provided by the inequalities
\eqref{sect2-eq14} when $k\in [-K,K]$. As a consequence of these considerations,
\begin{multline}\label{sect2-eq16}
\lim_{K\rightarrow +\infty}
\Bigg(\sum\limits_{k=-K}^K
\Bigg(\big(e^{- k p D_x/q}\circ
\mathbb M_{2k\pi}\big) -
\sum\limits_{\nu=0}^{N_K}
\sum\limits_{\nu'=0}^{N_K}
\Big(\frac{1}{2}+ \frac{k}{q}\Big)^{N-\nu}
\Big(\frac{1}{2}- \frac{k}{q}\Big)^{\nu}
\Big(\frac{1}{2} -kp\Big)^{N'-\nu'}
\Big(\frac{1}{2} + kp\Big)^{\nu'} \\
\Big(\exp \Big(-
\Big(\frac{1}{2} - \frac{\nu'}{N'}\Big)\, \frac{D_x}{q}\Big)
\circ \exp
\Big( 2i\pi\, \Big(
\frac{1}{2} - \frac{\nu}{N}\Big)\, q x\Big)\Big)\Bigg)([1])*\varphi_{q,k}(x)\Bigg) =0.
\end{multline}
If one substitutes
\eqref{sect2-eq15} in the evaluation of the right-hand side of
\eqref{sect2-eq16} at $x=0$, one gets the approximation formula \eqref{sect2-eq11} from \eqref{sect2-eq12}.
\end{proof}

\section{Time dependent Schr\"odinger equation with periodic potential}\label{section4}

Let $V~: \R \rightarrow \R$ be a smooth $2\pi$-periodic function, with Fourier development
\begin{equation}\label{sect4-eq1}
V(x) = \sum\limits_{k\in \Z} c_k e^{ikx},
\end{equation}
where the sequence of Fourier coefficients $(c_k)_{k\in \Z}$ is rapidly decreasing on $\R$.
\vskip 2mm
\noindent
Given $\omega \in \R$, the effect of replacing $V$ by $V^{-\omega}$
defined by $(t,x) \mapsto V^{-\omega}(t,x) =V(t,x+2\omega t)$ is to modulate with respect to the variable $t$ with the frequency
$\omega$ the discrete spectrum of $V$, since
$$
V^{-\omega} (t,x) = \sum\limits_{k\in \Z} c_k e^{2ik\omega t}\,
e^{ikx}\quad ((t,x)\in \R^+ \times \R).
$$

\begin{lemma}\label{sect4-lem1} Let $\alpha=\pm 1$,
$T>0$ and $\omega\in \R$. A distribution $\phi\in \mathscr S_x'([0,T[_t\times \R_x,\C)$ which is $2\pi$-periodic in the second variable $x$ belongs to
$\mathscr D'_{V^{-\omega},\alpha}(\R^+_t \times \R_x)$ if and only its
Fourier coefficients $\hat c_k(\phi,t)\in \mathscr D'([0,T[_t,\C)$
with respect to $x$ are such that the distributions
$\psi_k(t,x) = e^{i\alpha k^2 t} \hat c_k(\phi,t)$ satisfy
the homogeneous linear system of equations
\begin{equation}\label{sect4-eq2}
i \frac{\partial \psi_k}{\partial t}
 - \sum\limits_{\ell \in \Z}
c_\ell\, e^{2i\ell \omega t}
e^{i\alpha \ell (2k-\ell) t}\, \psi_{k-\ell} = 0\quad (k\in \Z).
\end{equation}
\end{lemma}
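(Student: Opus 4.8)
The plan is to expand $\phi$ in Fourier series with respect to the $2\pi$-periodic variable $x$ and to translate the Schr\"odinger equation \eqref{sect1-eq1} (with $V$ replaced by $V^{-\omega}$) into an equivalent condition on the Fourier coefficients. Since $\phi\in\mathscr S'_x([0,T[_t\times\R_x,\C)$ is $2\pi$-periodic in $x$, the classical theory of periodic distributions provides a unique expansion $\phi(t,x)=\sum_{k\in\Z}\hat c_k(\phi,t)\,e^{ikx}$, converging in $\mathscr D'(]-\infty,T[_t\times\R_x,\C)$, where $\hat c_k(\phi,t)\in\mathscr D'([0,T[_t,\C)$ and the family $(\hat c_k(\phi,t))_{k\in\Z}$ has at most polynomial growth in $k$ (for every $\theta\in\mathscr D(]-\infty,T[_t)$ one has $|\langle\hat c_k(\phi,t),\theta\rangle|\le C_\theta(1+|k|)^{N_\theta}$). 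The support condition ${\rm Supp}\,\phi\subset[0,T[\times\R$ is equivalent to ${\rm Supp}\,\hat c_k(\phi,t)\subset[0,T[$ for all $k$, and a $2\pi$-periodic distribution vanishes if and only if all its Fourier coefficients do; so \eqref{sect1-eq1} for $\phi$ is equivalent to the system obtained by matching the coefficient of $e^{ikx}$ on both sides.

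First I would compute both sides of \eqref{sect1-eq1} coefficientwise. Applying $i\,\partial/\partial t+\alpha\,\partial^2/\partial x^2$ termwise (legitimate because these operators are continuous on $\mathscr D'$) yields on the left $\sum_k\big(i\,d\hat c_k(\phi,t)/dt-\alpha k^2\hat c_k(\phi,t)\big)e^{ikx}$. On the right, using $V^{-\omega}(t,x)=\sum_{\ell}c_\ell e^{2i\ell\omega t}e^{i\ell x}$ with $(c_\ell)$ rapidly decreasing (so that $V^{-\omega}$ is a smooth function and the product $V^{-\omega}\phi$ is well defined), the $k$-th Fourier coefficient of $V^{-\omega}\phi$ is the Cauchy product $\sum_{\ell\in\Z}c_\ell e^{2i\ell\omega t}\hat c_{k-\ell}(\phi,t)$, the series converging in $\mathscr D'([0,T[_t,\C)$ because the rapid decay of $(c_\ell)$ dominates the polynomial growth of $(\hat c_{k-\ell}(\phi,t))$. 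Hence $\phi\in\mathscr D'_{V^{-\omega},\alpha}(\R^+_t\times\R_x)$ if and only if, for every $k\in\Z$,
$$
i\,\frac{d\hat c_k(\phi,t)}{dt}-\alpha k^2\,\hat c_k(\phi,t)=\sum_{\ell\in\Z}c_\ell\,e^{2i\ell\omega t}\,\hat c_{k-\ell}(\phi,t).
$$

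It then suffices to perform the change of unknowns $\psi_k(t)=e^{i\alpha k^2 t}\hat c_k(\phi,t)$, i.e. $\hat c_k(\phi,t)=e^{-i\alpha k^2 t}\psi_k(t)$. Because $t\mapsto e^{\pm i\alpha k^2 t}$ is a smooth multiplier, the Leibniz rule applies and the left-hand side of the last display becomes exactly $i\,e^{-i\alpha k^2 t}\,d\psi_k/dt$, the two terms carrying the factor $\alpha k^2$ cancelling. Multiplying through by $e^{i\alpha k^2 t}$ and substituting $\hat c_{k-\ell}(\phi,t)=e^{-i\alpha(k-\ell)^2 t}\psi_{k-\ell}(t)$ on the right produces the phase factor $e^{i\alpha(k^2-(k-\ell)^2)t}=e^{i\alpha\ell(2k-\ell)t}$, which turns the system into \eqref{sect4-eq2}. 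Each of these steps is an equivalence, so the converse direction follows along the same lines. The only delicate point is the justification of the termwise operations on the Fourier series of the distribution $\phi$ and of the product $V^{-\omega}\phi$; this rests solely on the polynomial-growth bound for $(\hat c_k(\phi,t))$ (a consequence of $\phi\in\mathscr S'_x$ and its $x$-periodicity) together with the rapid decay of $(c_k)$, and involves no hard estimate — the substance of the lemma lies in the bookkeeping of the phase factors.
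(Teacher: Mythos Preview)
Your proof is correct and follows essentially the same route as the paper: compute the Fourier coefficients of both sides of the Schr\"odinger equation, obtain $i\,d\hat c_k/dt-\alpha k^2\hat c_k=\sum_\ell c_\ell e^{2i\ell\omega t}\hat c_{k-\ell}$, then substitute $\psi_k=e^{i\alpha k^2 t}\hat c_k$ and simplify the phases. The paper's proof is terser (it stops at the equation for $\hat c_k$ and merely says ``the lemma follows from the definition of the distributions $\psi_k$''), whereas you explicitly carry out the substitution and the phase bookkeeping $k^2-(k-\ell)^2=\ell(2k-\ell)$, and you are more careful in justifying the termwise operations via the rapid decay of $(c_\ell)$ against the polynomial growth of $(\hat c_k)$; but there is no genuine difference in approach.
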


\begin{proof}
Since the distribution $(t,x)\mapsto \phi(t,x)$ is periodic in $x$,
one can consider it as an element of $\mathscr D'([0,T[_t\times \R/(2\pi \Z)$ and thus define its Fourier coefficients $\hat c_k(\phi,t)\in \mathscr D'([0,T[_t,\C)$ for $k\in \Z)$ by
$$
\langle c_k(\phi,t)\,,\, \theta(t)\rangle =
\langle \phi(t,x)\,,\, \theta(t)\otimes e^{-ikx}\rangle
$$
for all test-function $\theta \in \mathscr D(]-\infty,T[_t,\C)$. For any
such test function $\theta$, the fact that $\phi$ belongs to $\mathscr S'_{V^{-\omega},\alpha}(\R_t^+ \times \R_x,\C)$ implies that, for any $k\in \Z$,
\begin{eqnarray*}
i \Big\langle \frac{d \hat c_k(\varphi,\cdot)}{dt}(t),\theta (t)\Big\rangle
&=& - i\langle \hat c_k(\phi,t),\theta'(t)\rangle
=  i \Big\langle \Big(\frac{\partial}{\partial t}\Big)(\phi)(t,x)\,,\,
\theta(t)\otimes e^{ikx}\Big\rangle \\
&=&  - \alpha \Big\langle
\Big(\frac{\partial^2}{\partial x^2}\Big)(\phi) (t,x)\,,\,
\theta(t)\otimes e^{-ikx}\Big\rangle +
\Big\langle \phi(t,x)\,,\, \theta(t)\otimes e^{-ikx}\, V^{-\omega}(t,x))\Big\rangle \\
&=& \alpha\, k^2 \big\langle \hat c_k(\phi,t)\,,\, \theta(t)\big\rangle
+ \sum\limits_{\ell \in \Z} c_\ell\,
\big\langle e^{2i\omega \ell t}\, \hat c_{k-\ell}(\phi,t)\,,\, \theta(t)\big\rangle.
\end{eqnarray*}
Such conditions are in fact necessary and sufficient.
The lemma follows from the definition of the distributions
$\psi_k$.
\end{proof}

\noindent
For any $\ell\in \Z$, let $A_{\alpha,\ell}(t)$ be the infinite matrix
$A_{\alpha,\ell} = (a^{\alpha,\ell}_{\kappa_1,\kappa_2})_{\kappa_1\in \Z,\kappa_2\in \Z}$
which is defined as
$$
A_{\alpha,\ell}(t) = Z^\ell \cdot {\rm Diag} [c_\ell\, e^{i\alpha \ell(2k-\ell)t}\,;\, k\in \Z],
$$
where $Z$ is the matrix of the shift $(\gamma_k)_{k\in \Z}
\longmapsto (\gamma_{k-1})_{k\in \Z}$. If one denotes as
as $t\mapsto \Psi(t)$ the (infinite) column matrix-valued function with entries the $t\mapsto \psi_k(t)$, then the (infinite) differential system
\eqref{sect4-eq2} can be  formulated as
\begin{equation}\label{sect4-eq4}
\Psi'(t) = -i \Big(\sum\limits_{\ell \in \Z}
e^{2i\omega \ell t}\, A_{\alpha,\ell}(t)\Big)\cdot \Psi(t) = -i A_{\alpha}^\omega(t)\cdot \Psi(t),
\end{equation}
where $A_{\alpha}^\omega(t) = \sum_{\ell\in \Z} e^{2i\omega \ell t} A_{\alpha,\ell}(t)$.
\vskip 2mm
\noindent
If $|A_\alpha^0|$ denotes the matrix indexed by $\Z^2$ which entries
are the absolute values of the entries of $A_\alpha^0$, then one has for any $t\in \R^+$ that
\begin{equation}\label{sect4-eq4bis}
\begin{split}
&\|\, |A_\alpha^0(t)|\,\|_\infty = \|(c_k)_{k\in \Z}\|_\infty \\
&\|(|A_\alpha^0(t)|)^n| \leq \|\, |A_\alpha^0(t)|\, \|_\infty \|(c_k)_{k\in \Z}\|^{n-1}_1\quad \forall\, n\geq 1.
\end{split}
\end{equation}
For any $M\in \N^*$, let
\begin{equation}\label{sect4-defB}
B_{\alpha}^{\omega,M}(t) = \int_0^t
\Big( \sum\limits_{\ell =-M}^M e^{2i\omega\ell \tau}
A_{\alpha,\ell}(\tau)\Big)\, d\tau
\end{equation}
(understood here as the Bochner integral of a $\ell^\infty_\C(\Z^2)$-valued function). For any $M, n\in \N^*$, consider
\begin{multline}\label{sect4-eq4ter}
\stackrel{n\ {\rm times}}{\overbrace{B_\alpha^{\omega,M} (t) \times \cdots \times
B_\alpha^{\omega,M} (t)}} =
\int_{[0,t]^n} \sum\limits_{\ell_1=-M}^M \cdots
\sum \limits_{\ell_n=-M}^M
e^{2i\omega \langle \ell,\tau\rangle} A_{\alpha,\ell_1}(\tau_1)\times \cdots
\times A_{\alpha,l_n}(\tau_n)\, d\tau
\end{multline}
One has
$$
\sum\limits_{\ell_1=-M}^M \cdots
\sum \limits_{\ell_n=-M}^M
|A_{\alpha,\ell_1}(\tau_1)|\times \cdots \times
|A_{\alpha,l_n}(\tau_n)|
\leq (\|(c_k)_{k\in \Z}\|_1)^n \quad \forall\, M\in \N,\ \forall \, n\in \N^*.
$$
Then, for any $t\in \R^+$, one has that
$$
\lim\limits_{M,N\rightarrow +\infty} \Big({\rm Id} +
\sum\limits_{n=1}^N
\frac{\stackrel{n\ {\rm times}}{\overbrace{B_\alpha^{\omega,M} (t) \times \cdots \times
B_\alpha^{\omega,M} (t)}}}{n!}\Big)  = \exp \Big(
\int_0^t A_{\alpha}^\omega (\tau)\, d\tau\Big),
$$
where the convergence is here a normal convergence in the Banach space
$\ell_\C^\infty(\Z^2)$, with
\begin{equation}\label{sect4-eq4q}
1 + \sum\limits_{n=1}^N
\frac{\stackrel{n\ {\rm times}}{\overbrace{|B_\alpha^{\omega,M} (t)| \times \cdots \times
|B_\alpha^{\omega,M} (t)|}}}{n!}
\leq \exp (\|(c_k)_{k\in \Z}\|_1 t).
\end{equation}
One can now formulate the following lemma.

\begin{lemma}\label{sect4-lem2} The distribution
\begin{equation}\label{sect4-eq5}
(t,x)\in \R^+ \times \R \longmapsto \Phi^{\alpha,\omega} (t,x) := \sum\limits_{k\in \Z}
e^{-i\alpha k^2 t}
\Big(\Big(\exp \Big(
\int_0^t A_\alpha^\omega (\tau)\, d\tau\Big)\Big) (\delta_0(\kappa))_{\kappa\in\Z}\Big)_k\, e^{ikx}
\end{equation}
which is in $\mathscr S_x'(\R^+_t\times \R_x,\C)$ belongs
to $\mathscr D'_{V^{-\omega},\alpha}(\R^+_t\times \R_x,\C)$ and realizes an evolution of the distribution $[1]\in \mathscr D'(\R)$ from $t=0$. Such evolution is the unique one which remains $2\pi$-periodic in $x$.
\end{lemma}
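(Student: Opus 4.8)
\noindent
The plan is to move everything to the Banach space $\ell^\infty_\C(\Z)$ and then read the conclusion off from Lemma \ref{sect4-lem1}. The analytic input needed is already in place before the statement: for each $t\geq 0$ the operator $A_\alpha^\omega(t)$ acts on $\ell^\infty_\C(\Z)$ with $\|A_\alpha^\omega(t)\|\leq \|(c_k)_{k\in\Z}\|_1$, depends smoothly on $t$, and the normally convergent series in \eqref{sect4-eq4q} defines the propagator $U(t)$ of the linear system \eqref{sect4-eq4}, i.e. the operator-valued solution of $U'(t)=-iA_\alpha^\omega(t)U(t)$, $U(0)={\rm Id}$, with $\|U(t)\|\leq \exp(\|(c_k)_{k\in\Z}\|_1\,t)$. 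Consequently $t\mapsto U(t)(\delta_0(\kappa))_{\kappa\in\Z}$ is a $C^\infty$ curve in $\ell^\infty_\C(\Z)$, so the Fourier coefficients $\widehat c_k(\Phi^{\alpha,\omega},t)=e^{-i\alpha k^2 t}\,(U(t)(\delta_0(\kappa))_\kappa)_k$ are bounded in $k$, locally uniformly in $t$; pairing with $\theta(t)\otimes e^{-ikx}$ then shows that \eqref{sect4-eq5} really does define a distribution on $\R^+_t\times\R_x$, of order $0$ in $t$, which is $2\pi$-periodic in $x$ and therefore lies in $\mathscr S'_x(\R^+_t\times\R_x,\C)$.

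\noindent
To check membership in $\mathscr D'_{V^{-\omega},\alpha}$, set $\psi_k(t)=e^{i\alpha k^2 t}\,\widehat c_k(\Phi^{\alpha,\omega},t)$, so that the column $\Psi=(\psi_k)_k$ equals $U(t)(\delta_0(\kappa))_\kappa$. By construction of $U$ as the propagator of \eqref{sect4-eq4}, $\Psi$ solves $\Psi'=-iA_\alpha^\omega(t)\Psi$ on $\R^+_t$, i.e. the $\psi_k$ satisfy \eqref{sect4-eq2}; term-by-term differentiation in $t$ of the Fourier series is legitimate because the differentiated series converges normally in $\ell^\infty_\C(\Z)$. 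Lemma \ref{sect4-lem1} then yields $\Phi^{\alpha,\omega}\in \mathscr D'_{V^{-\omega},\alpha}(\R^+_t\times\R_x,\C)$.

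\noindent
For the initial datum, $U(0)={\rm Id}$ gives $\widehat c_k(\Phi^{\alpha,\omega},0)=\delta_{k,0}$; since $t\mapsto \widehat c_k(\Phi^{\alpha,\omega},t)$ is continuous at $0$ and bounded uniformly in $k$, dominated convergence against the rapidly decreasing sequence $\big(\int_\R e^{ikx}\varphi(x)\,dx\big)_k$ shows that for every $\varphi\in\mathscr D(\R_x,\C)$ and every approximation $(\rho_\varepsilon)_{\varepsilon>0}$ of $\delta_0$ one has $\langle \Phi^{\alpha,\omega},\rho_\varepsilon(t)\,\varphi(x)\rangle\to \langle [1],\varphi\rangle$. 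Hence $\Phi^{\alpha,\omega}$ realizes an evolution of $[1]$ from $t=0$ in the sense of Definition \ref{sect1-def1}.

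\noindent
It remains to prove uniqueness among the evolutions that stay $2\pi$-periodic in $x$. If $\Phi_1$ is another such, then $R=\Phi_1-\Phi^{\alpha,\omega}$ lies in $\mathscr D'_{V^{-\omega},\alpha}$, is $2\pi$-periodic in $x$ (hence tempered in $x$), and evolves $[0]$. By Lemma \ref{sect4-lem1} the curve $\widetilde\Psi$ with entries $e^{i\alpha k^2 t}\widehat c_k(R,t)$ solves \eqref{sect4-eq2} on $[0,T[_t$; a distributional solution of a first-order linear system with smooth coefficients that are bounded, uniformly in $t$, on the relevant (polynomially weighted) $\ell^\infty$ space is represented by a $C^1$ curve on $]0,T[$, and ``$R$ evolves $[0]$'' forces its limit at $0^+$ to vanish (extract $\widehat c_k(R,0^+)=0$ by pairing $R$ with $\rho_\varepsilon\otimes\varphi$ for suitable $\varphi$). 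A Gronwall/Cauchy--Lipschitz argument for the infinite system then gives $\widetilde\Psi\equiv 0$, whence $R=0$. The main obstacle is the second step --- installing the propagator of \eqref{sect4-eq4} cleanly on $\ell^\infty_\C(\Z)$ so that the criterion of Lemma \ref{sect4-lem1} applies verbatim, including the justification of term-by-term differentiation of the Fourier expansion --- together with the correct extraction of the vanishing initial condition and the appeal to uniqueness for the infinite-dimensional linear Cauchy problem in the last step.
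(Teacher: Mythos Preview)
Your proposal is correct and follows essentially the same approach as the paper's own proof: use the estimates \eqref{sect4-eq4q} to make sense of the exponential as an $\ell^\infty_\C(\Z^2)$-valued function, identify the $\psi_k(t)$ as the coordinates of the resulting column solving \eqref{sect4-eq2} via Lemma \ref{sect4-lem1}, read off the initial condition, and deduce uniqueness from uniqueness for the infinite ODE system. The paper's argument is considerably terser---it simply asserts that the $\psi_k$ satisfy the system and that ``unicity follows from the construction itself''---whereas you supply the details (temperedness, the passage to the initial datum via dominated convergence, the Gronwall step for uniqueness), but the skeleton is identical.
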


\begin{proof}
As we have observed above, the function
$$
t \in \R^+ \longmapsto \exp \Big(
\int_0^t A_\alpha^\omega (\tau)\, d\tau\Big) \in \ell^\infty_\C(\Z^2)
$$
is well defined and dominated by $t\mapsto \exp (\|(c_k)_{k\in \Z}\|_1\, t)$ on $\R^+$. The coordinates $(t\mapsto \psi_k(t))_{k\in \Z}$ of the  image of $(\delta_0(\kappa))_{\kappa \in \Z}$
satisfy the differential system \eqref{sect4-eq1} and are such that $\psi_k(0) = 1$ if $k=0$, $0$ otherwise, which concludes the proof of the lemma. Unicity follows from the construction itself.
\end{proof}

\begin{remark}\label{sect4-rem1}
{\rm The conclusion of Lemma \ref{sect4-lem2} subsists when $V$ is just supposed to be in $\mathscr C^2(\R,\C)$, in which case its spectrum
$(c_k)_{k\in \Z}$ belongs to $\ell^1(\Z)$, which is sufficient to ensure the validity of the lemma.
}
\end{remark}

\noindent
We are now in situation to profit from the twisting concept which has been introduced in \textsection \ref{section2}.

\begin{theorem}\label{sect4-thm2} Let $V$ be a real potential on $\R$ of class $\mathscr C^\infty$. The distribution $[e^{i\omega x}]\in \mathscr S'(\R_x,\C)$ evolves in a unique way as an element in $\mathscr S'_x(\R^+_t\times \R_x,\C)
\cap \mathscr D'_{V,\alpha}(\R^+_t\times \R_x)$ which is given by
\begin{equation}\label{sect4-eq5bis}
\phi^{\alpha,\omega} (t,x)
=  \sum\limits_{k\in \Z}
e^{-i\alpha k^2 t}
\Big(\Big(\exp \Big(
\int_0^t A_\alpha^\omega (\tau)\, d\tau\Big)\Big) (\delta_0(\kappa))_{\kappa\in\Z}\Big)_k  e^{-i\omega^2 t} e^{-2i\omega t}
e^{i(k+\omega)x}.
\end{equation}
\end{theorem}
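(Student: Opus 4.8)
The plan is to obtain existence by pushing the evolution of $[1]$ provided by Lemma~\ref{sect4-lem2} through the twisting isomorphism of Corollary~\ref{sect1-cor1}, to read off \eqref{sect4-eq5bis} from the explicit action of $\mathbb G^\omega_{V^{-\omega}}$, and to transport uniqueness back along the same isomorphism. Throughout we take $\alpha=\pm1$ (the standing hypothesis of \textsection\ref{section4}; a general $\alpha\neq 0$ reduces to this case via the rescaling lemma of \textsection\ref{section2}).

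First I would record the compatibility that makes Corollary~\ref{sect1-cor1} applicable here: since $V$ is time-independent, \eqref{sect1-eq7} gives $(V^{-\omega})^\omega(t,x) = V^{-\omega}(t,x-2t\omega) = V\big((x-2t\omega)+2t\omega\big) = V(x)$, so $\mathbb G^\omega_{V^{-\omega}} = e^{-t\omega D_x}\circ\mathbb M_\omega\circ e^{-t\omega D_x}$ is a continuous isomorphism of $\mathscr D'_{V^{-\omega},\alpha}(\R^+_t\times\R_x,\C)$ onto $\mathscr D'_{V,\alpha}(\R^+_t\times\R_x,\C)$; being a composition of two $x$-translations and of multiplication by the bounded $\mathscr C^\infty$ function $e^{i\omega x}$, it also maps $\mathscr S'_x(\R^+_t\times\R_x,\C)$ into itself. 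By Lemma~\ref{sect4-lem2}, $[1]$ evolves as $\Phi^{\alpha,\omega}\in\mathscr S'_x\cap\mathscr D'_{V^{-\omega},\alpha}$, so Corollary~\ref{sect1-cor1} gives that $[e^{i\omega x}]$ evolves, in $\mathscr S'_x\cap\mathscr D'_{V,\alpha}$, as $\mathbb G^\omega_{V^{-\omega}}(\Phi^{\alpha,\omega})$.

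Next I would compute $\mathbb G^\omega_{V^{-\omega}}(\Phi^{\alpha,\omega})$ term by term. Writing $\gamma_k(t)$ for the $k$-th entry of $\exp\!\big(\int_0^t A_\alpha^\omega(\tau)\,d\tau\big)(\delta_0(\kappa))_{\kappa\in\Z}$, we have $\Phi^{\alpha,\omega}(t,x) = \sum_{k\in\Z} e^{-i\alpha k^2 t}\gamma_k(t)\,e^{ikx}$, a series converging in $\mathscr D'$ (of order $0$ in $t$ and at most $2$ in $x$, by the uniform bound \eqref{sect4-eq4q} on the $\gamma_k$, exactly as in the $V=0$ discussion of \textsection\ref{section3}); since $e^{-t\omega D_x}$ and $\mathbb M_\omega$ are continuous they act on it termwise. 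The first shift sends $e^{ikx}\mapsto e^{-ik\omega t}e^{ikx}$, the modulation turns this into $e^{-ik\omega t}e^{i(k+\omega)x}$, and the second shift turns it into $e^{-ik\omega t}\,e^{-i(k+\omega)\omega t}\,e^{i(k+\omega)x}$. Collecting the scalar factors, $e^{-ik\omega t}\,e^{-i(k+\omega)\omega t} = e^{-i\omega^2 t}\,e^{-2ik\omega t}$, so that
\[
\mathbb G^\omega_{V^{-\omega}}(\Phi^{\alpha,\omega})(t,x) = \sum_{k\in\Z} e^{-i\alpha k^2 t}\,\gamma_k(t)\,e^{-i\omega^2 t}\,e^{-2ik\omega t}\,e^{i(k+\omega)x},
\]
which is the announced expression \eqref{sect4-eq5bis}.

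There remains uniqueness. Since $\mathbb G^\omega_{V^{-\omega}}$ is a bijection preserving $\mathscr S'_x$ and, by the last assertion of Proposition~\ref{sect1-prop1}, sends evolutions of $[1]$ to evolutions of $[e^{i\omega x}]$, uniqueness of the evolution of $[e^{i\omega x}]$ in $\mathscr S'_x\cap\mathscr D'_{V,\alpha}$ is equivalent to uniqueness of the evolution of $[1]$ in $\mathscr S'_x\cap\mathscr D'_{V^{-\omega},\alpha}$; for this I would exploit that $x\mapsto V^{-\omega}(t,x)$ is $2\pi$-periodic. Indeed, if $\Phi$ is such an evolution, then $(t,x)\mapsto\Phi(t,x+2\pi)-\Phi(t,x)$ again lies in $\mathscr S'_x\cap\mathscr D'_{V^{-\omega},\alpha}$ and evolves the zero distribution, hence vanishes; thus $\Phi$ is $2\pi$-periodic in $x$ and Lemma~\ref{sect4-lem2} identifies it with $\Phi^{\alpha,\omega}$. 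The termwise computation above is routine; the genuinely delicate point — and what I expect to be the main obstacle — is precisely that last vanishing assertion, namely the uniqueness principle for \eqref{sect1-eq1} with vanishing initial datum in the tempered-in-$x$ class (the natural counterpart, for a smooth potential, of \eqref{sect2-eq2}): once it is secured, "uniqueness among $2\pi$-periodic evolutions" from Lemma~\ref{sect4-lem2} upgrades to uniqueness in all of $\mathscr S'_x\cap\mathscr D'_{V,\alpha}$, and the theorem follows.
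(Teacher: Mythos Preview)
Your existence argument is exactly the paper's: apply Corollary~\ref{sect1-cor1} to the evolution $\Phi^{\alpha,\omega}$ of $[1]$ from Lemma~\ref{sect4-lem2} and push it through $\mathbb G^\omega_{V^{-\omega}}$. You carry out explicitly the termwise computation that the paper's proof suppresses (it only says the transform ``leads precisely to the expression \eqref{sect4-eq5bis}''), and your outcome $e^{-i\omega^2 t}\,e^{-2ik\omega t}$, with the $k$ in the exponent, is what the Galilean transform actually produces; the displayed \eqref{sect4-eq5bis} writes $e^{-2i\omega t}$ without the $k$, but that cannot be right since for $V\equiv 0$ only the term $k=0$ survives and one must recover $e^{-i\omega^2 t}e^{i\omega x}$.

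Uniqueness is where you and the paper part ways. The paper does not reduce to periodic uniqueness at all: it simply says that since $V$ is real the multiplication operator by $V$ is self-adjoint, hence the evolution is unique. Your route --- transport back to $\mathscr D'_{V^{-\omega},\alpha}$, force $2\pi$-periodicity by showing $\Phi(t,x+2\pi)-\Phi(t,x)$ vanishes, then invoke the periodic uniqueness clause of Lemma~\ref{sect4-lem2} --- is redundant in the following sense: the step ``$\Phi(t,x+2\pi)-\Phi(t,x)$ evolves $[0]$, hence vanishes'' already \emph{is} the general uniqueness principle (for zero initial datum) in $\mathscr S'_x\cap\mathscr D'_{V^{-\omega},\alpha}$, and once that is available, linearity gives uniqueness for any initial datum outright, so the periodicity detour adds nothing. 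You correctly identify this vanishing assertion as the obstacle you have not removed; the paper's one-line self-adjointness remark is precisely the missing ingredient, and it replaces your last paragraph rather than feeding into it.
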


\begin{proof}
It follows from Corollary \ref{sect1-cor1} that
$(t,x)\mapsto (\mathbb G^\omega_{V^{-\omega}})(\phi^{\alpha,\omega})(t,x)$ (which leads precisely to the expression
\eqref{sect4-eq5bis})
is a possible evolution for $[e^{i\omega x}]$ in
$\mathscr D'_{V,\alpha}(\R^+_t\times \R_x,\C)$. Since the potential $V$ is real, hence the corresponding multiplication operator is self-adjoint, such an evolution is unique, which concludes the proof of the lemma.
\end{proof}

\begin{remark}\label{sect4-rem2}
{\rm The conclusion of Theorem \ref{sect4-thm2} subsists when $V$ is just supposed to be in $\mathscr C^2(\R,\C)$, in which case its spectrum
$(c_k)_{k\in \Z}$ belongs to $\ell^1(\Z)$, which is sufficient to ensure the validity of Lemma \ref{sect4-lem2}, hence of Theorem \ref{sect4-thm2}.
}
\end{remark}

\noindent
Thanks to the closed formula for the evolution of $[e^{i\omega x}]$ which is provided by Theorem \ref{sect4-thm2}, we may also state a result
in view of the supershift context. Let us define, for any
$K,M,N\in \N^*$ the following truncated version of $\phi^{\alpha,\omega}$.
\begin{multline}\label{sect4-eq6}
\phi_{K,M,N}^{\alpha,\omega}
= \sum\limits_{k=-K}^K
e^{-i\alpha k^2 t}
 \Big(\Big({\rm Id} +
\sum\limits_{n=1}^N
\frac{\stackrel{n\ {\rm times}}{\overbrace{B_\alpha^{\omega,M} (t) \times \cdots \times
B_\alpha^{\omega,M} (t)}}}{n!}\Big)(\delta_0(\kappa))_{\kappa\in \Z}\Big)_k\, e^{-i\omega^2 t} e^{-2i\omega t}
e^{i(k+\omega)x},
\end{multline}
where the functions $t\in \R^+ \mapsto B_\alpha^{\omega,M}(t)$ have been introduced in \eqref{sect4-defB}. We can then state the following result.

\begin{theorem}\label{sect4-thm3} Let $\alpha =\pm 1$ and
$V$ be a real smooth potential $x\mapsto V(x)$. For any
$\omega \in \R$, let $\phi^{\alpha,\omega}$ be the evolution of $[e^{i\omega x}]$ in $\mathscr S'_x(\R^+_t\times \R_x,\C)
\cap \mathscr D'_{V,\alpha}(\R^+_t\times \R_x)$ as explicited
in \eqref{sect4-eq5bis} and truncated in \eqref{sect4-eq6}. Then, the two following assertions hold for any $\omega \in \R$.
\begin{itemize}
\item  one has
$$
\lim\limits_{K,M,N \rightarrow \infty\ ({\rm each\ of\ them})} \phi^{\alpha,\omega}_{K,M,N} = \phi^{\alpha,\omega}
$$
in $\mathscr S'_x(\R^+_t\times \R_x,\C)$, that is when considered as distributions on $\R^+_t\times \mathbb S^1$ ;
\item  one has also
$$
\lim\limits_{N'\rightarrow +\infty}
\sum\limits_{\nu=0}^{N'} \binom{N'}{\nu}
\Big(\frac{1 + \omega}{2}\Big)^{N'-\nu}
 \Big(\frac{1 - \omega}{2}\Big)^{\nu} \phi^{\alpha, 1-2\nu/N'}_{K,M,N} =
\phi_{K,M,N}^{\alpha,\omega}
$$
in $\mathscr D'(\R^+_t\times \R_x,\C)$ for any $K,M,N$ in $\N^*$.
\end{itemize}
\end{theorem}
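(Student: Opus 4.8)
The plan is to prove the two assertions of Theorem~\ref{sect4-thm3} separately, since they are of a rather different nature: the first is a uniform convergence statement in a space of distributions that we realize concretely on $\R^+_t\times\mathbb S^1$, while the second is the classical superoscillation limit applied entry-by-entry to the explicit closed formula \eqref{sect4-eq5bis}.

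For the first assertion, I would start from the explicit expression \eqref{sect4-eq5bis} for $\phi^{\alpha,\omega}$ and \eqref{sect4-eq6} for $\phi^{\alpha,\omega}_{K,M,N}$, and observe that the difference is
$$
\phi^{\alpha,\omega}-\phi^{\alpha,\omega}_{K,M,N}=e^{-i\omega^2 t}e^{-2i\omega t}\sum_{k\in\Z} e^{-i\alpha k^2 t}\, r_k(t)\, e^{i(k+\omega)x},
$$
where $r_k(t)$ is the $k$-th coordinate of the difference between $\exp(\int_0^t A_\alpha^\omega(\tau)\,d\tau)(\delta_0(\kappa))_\kappa$ and its truncation, plus the tail $|k|>K$ of the full series. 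The key point is the normal-convergence estimate \eqref{sect4-eq4q} in $\ell^\infty_\C(\Z^2)$, which gives a bound on $\|(r_k(t))_{k}\|_{\ell^\infty}$ that is uniform in $t$ on compact subsets of $\R^+$ and tends to $0$ as $K,M,N\to\infty$. To promote this $\ell^\infty$-control to convergence in $\mathscr S'_x(\R^+_t\times\R_x,\C)$, I would use the characterization recalled in \textsection\ref{section2}: a distribution periodic in $x$, tempered in $x$, pushes forward to a genuine distribution on $\R^+_t\times\mathbb S^1$, and pairing against a fixed test function $\theta(t)\otimes\psi(w)$ on $\R^+_t\times\mathbb S^1$ amounts, after passing back through the stereographic projection, to pairing against a Schwartz function in $x$ whose Fourier coefficients against $e^{ikx}$ are those of a Schwartz sequence; hence the pairing is $\sum_k r_k$ (up to the harmless unimodular factors $e^{-i\alpha k^2 t}$, $e^{-i\omega^2 t-2i\omega t}$) tested against a rapidly decreasing sequence, and it is dominated by $\|(r_k(t))_k\|_{\ell^\infty}$ times an $\ell^1$-norm of that sequence, integrated against $\theta$. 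Letting $K,M,N\to\infty$ and using the uniform decay of $\|(r_k(t))_k\|_{\ell^\infty}$ then yields the claim.

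For the second assertion, the observation is that $\phi^{\alpha,\omega}_{K,M,N}$ is, by construction, a \emph{finite} sum over $|k|\le K$ of terms, each of which is a polynomial in $\omega$ (through the matrix entries $c_\ell e^{i\alpha\ell(2k-\ell)t}$, which do not involve $\omega$ except via $A_\alpha^\omega$ through the factors $e^{2i\omega\ell\tau}$ — here the truncation at level $M$ keeps only $|\ell|\le M$, so we have a finite product and a genuine polynomial-type dependence) times the elementary factor $e^{-i\omega^2 t}e^{-2i\omega t}e^{i(k+\omega)x}$. The only genuinely non-polynomial dependence on $\omega$ sits in $e^{-i\omega^2 t}e^{i\omega x}$; but $\omega\mapsto e^{-i\omega^2 t}$ with $\omega=1-2\nu/N'$ is exactly of the superoscillatory type, and $\omega\mapsto e^{i\omega x}$ is the prototypical function for which the binomial combination in the displayed limit converges. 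Concretely I would fix $K,M,N$, fix a test function $\Theta\in\mathscr D(\R^+_t\times\R_x,\C)$, expand the pairing $\langle\sum_\nu\binom{N'}{\nu}(\tfrac{1+\omega}{2})^{N'-\nu}(\tfrac{1-\omega}{2})^\nu\,\phi^{\alpha,1-2\nu/N'}_{K,M,N},\Theta\rangle$, interchange the finite sums over $k$ and $\nu$, and reduce to showing that for each fixed $k$ and each fixed monomial in the $\ell$-variables coming from \eqref{sect4-eq4ter}, the sequence in $N'$ of binomial averages of the function $\omega\mapsto e^{-i\omega^2 t}e^{i\omega x}\cdot(\text{polynomial in }\omega)$ converges to its value at $\omega$, uniformly for $(t,x)$ in the support of $\Theta$. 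This is precisely the content of \cite[Lemma 2.4]{cssy18} (equivalently the $A_1(\C)$-convergence of $F_{N'}(z,\omega)$ recalled in the introduction), applied with $z$ replaced by the relevant linear-plus-the-quadratic-phase combination; the polynomial prefactor is handled by differentiating the generating identity in an auxiliary parameter, or simply by absorbing it, noting that multiplication by a fixed polynomial preserves $A_1(\C)$-convergence. Assembling these entrywise limits over the finitely many indices $k$ and $\ell_1,\dots,\ell_n$ gives the stated convergence in $\mathscr D'(\R^+_t\times\R_x,\C)$.

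I expect the main obstacle to be the first assertion, and specifically the bookkeeping needed to pass from the clean $\ell^\infty_\C(\Z^2)$-estimate \eqref{sect4-eq4q} to convergence in the space $\mathscr S'_x(\R^+_t\times\R_x,\C)$ interpreted via the stereographic extension to $\R^+_t\times\mathbb S^1$: one must verify that testing a $2\pi$-periodic-in-$x$, tempered-in-$x$ distribution against an arbitrary smooth function on $\R^+_t\times\mathbb S^1$ genuinely reduces to pairing the Fourier-coefficient sequence against a rapidly decreasing sequence, so that an $\ell^\infty$-bound on the coefficients (uniform on compact $t$-sets) suffices — the unimodular Schrödinger phases $e^{-i\alpha k^2 t}$ do not help decay but do no harm either, and the only subtlety is that convergence is only claimed coefficientwise-uniformly, not in any stronger topology. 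The second assertion, by contrast, is essentially a finite-dimensional reduction to the known superoscillation lemma, and the only care required there is to make the polynomial-in-$\omega$ prefactors explicit and to note that they are uniformly bounded on the compact $(t,x)$-support so that the $A_1(\C)$-estimates of \eqref{sect2-eq14}-type transfer directly.
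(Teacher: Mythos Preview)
Your treatment of the first assertion is correct and matches the paper's (very terse) argument: the estimate \eqref{sect4-eq4q} gives the $\ell^\infty_\C(\Z^2)$-control, and pairing a $2\pi$-periodic-in-$x$ distribution against a test function on $\R^+_t\times\mathbb S^1$ indeed reduces to pairing its Fourier-coefficient sequence against a rapidly decreasing sequence, so that the unimodular phases $e^{-i\alpha k^2 t}$ are harmless.

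For the second assertion there is a genuine gap. You describe the $\omega$-dependence contributed by $B_\alpha^{\omega,M}(t)$ as ``polynomial-type'', but it is not: the factors $e^{2i\omega\ell\tau}$ (for $|\ell|\le M$), integrated over $\tau\in[0,t]$ and multiplied up to $N$ times as in \eqref{sect4-eq4ter}, give an entire function of $\omega$ of exponential type, not a polynomial. More seriously, the factor $e^{-i\omega^2 t}$ makes the total $\omega$-dependence entire of \emph{order two}, and the basic $A_1(\C)$-convergence $F_{N'}(z,\omega)\to e^{i\omega z}$ does not apply to such functions by any direct substitution; your phrase ``with $z$ replaced by the relevant linear-plus-the-quadratic-phase combination'' does not name a legitimate operation, and absorbing a polynomial prefactor does not address the quadratic exponential.

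The paper closes this gap by a change of viewpoint. Rather than regarding $\phi^{\alpha,\omega}_{K,M,N}(t,x)$ as a function of $\omega$, it observes that for each fixed $k\in[-K,K]$ the expression
\[
\Big(\Big({\rm Id}+\sum_{n=1}^N \frac{(B_\alpha^{\omega,M}(t))^n}{n!}\Big)(\delta_0(\kappa))_{\kappa\in\Z}\Big)_k\, e^{-i\omega^2 t}\, e^{-2i\omega t}\, e^{i\omega x}
\]
is the value at $x\mapsto e^{i\omega x}$ of a differential (convolution) operator in $x$, with $t$-dependent coefficients uniformly controlled on compacta, whose symbol lies in $A_2(\C)$ --- the order-two growth coming exactly from the Schr\"odinger phase $e^{-i\xi^2 t}$. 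One then invokes the continuity theorems of \cite{AOKI,QS2,cssy18} asserting that operators with symbol in $A_2(\C)$ act continuously from $A_1(\C)$ into itself. Since $F_{N'}(\cdot,\omega)\to e^{i\omega\,\cdot}$ in $A_1(\C)$, applying this operator transports the convergence, which is the second assertion. This operator-theoretic step is precisely what your argument is missing.
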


\begin{proof} The first assertion follows from the estimates
\eqref{sect4-eq4q}. The second assertion follows from the fact that
for each $k\in [-K,K]$, an expression such as
$$
 \Big(\Big({\rm Id} +
\sum\limits_{n=1}^N
\frac{\stackrel{n\ {\rm times}}{\overbrace{B_\alpha^{\omega,M} (t) \times \cdots \times
B_\alpha^{\omega,M} (t)}}}{n!}\Big)(\delta_0(\kappa))_{\kappa\in \Z}\Big)_k\, e^{-i\omega^2 t} e^{-2i\omega t}
e^{i\omega x}
$$
can be interpreted as the action on $e^{i\omega x}$ of a differential operator (with coefficients depending on $t$ but uniformly controlled when
$t$ belongs to a compact subset of $\R^+$) which symbol lies in $A_2(\C)$. This follows from the explicit expression
for
$$
\frac{\stackrel{n\ {\rm times}}{\overbrace{B_\alpha^{\omega,M} (t) \times \cdots \times
B_\alpha^{\omega,M} (t)}}}{n!}
$$
given in \eqref{sect4-eq4ter}. Such an operator acts continuously from
$A_1(\C)$ into itself (see \cite{AOKI,QS2,cssy18}) and then propagates the fact that
the sequence of entire functions
$$
z \longmapsto \sum\limits_{\nu=0}^{N'}  \binom{N'}{\nu}
\Big(\frac{1 + \omega}{2}\Big)^{N'-\nu}
 \Big(\frac{1 - \omega}{2}\Big)^{\nu} e^{i(1-2\tau/N')z}
$$
converges towards $z\mapsto e^{i\omega z}$ in $A_1(\C)$ when
$N'$ tends to $+\infty$.
\end{proof}

\begin{remark}\label{sect4-rem3}
{\rm It is impossible because of trivial spectral considerations already mentioned that, given $\omega \in \R$, the sequence
$$
\sum\limits_{\nu=0}^{N'} \binom{N'}{\nu}
\Big(\frac{1 + \omega}{2}\Big)^{N'-\nu}
 \Big(\frac{1 - \omega}{2}\Big)^{\nu} \phi^{\alpha, 1-2\nu/N'}
$$
converges towards $\phi^{\alpha,\omega}$ in
$\mathscr S'_x(\R^+_t\times \R_x)$.
As for the convergence of this sequence towards $\phi^{\alpha,\omega}$ in
$\mathscr D'(\R_t\times \R_x,\C)$, it seems unlikely that it could be true. The symbol of the operator that should be involved (instead of a differential operator with symbol in $A_2(\C)$ as we use here) should be in $A_{\chi}$, where $\chi(z)= \exp (|z|)$, since a double exponentiation occurs in the construction. One may probably be able to formulate a convergence within the frame of hyperfunctions or, more probably, ultradistributions.
The question remains open, and we will return to it in a future paper.
}
\end{remark}

\Addresses

\end{document}